\documentclass[aps, reprint, nofootinbib, showkeys, amsmath, amssymb, superscriptaddress, floatfix]{revtex4-2}
\makeatletter
\def\Dated@name{}
\makeatother
\usepackage{float}
\usepackage{amsthm}
\usepackage{amsfonts}
\usepackage{braket}
\usepackage{soul}
\usepackage{simpler-wick}
\usepackage{ wasysym }
\usepackage{placeins}
\usepackage{soul}
\usepackage{color}
\usepackage{comment}
\usepackage[dvipsnames]{xcolor}
\usepackage{listings}
\lstset{breaklines=true}
\usepackage{amsmath, amssymb}
\usepackage[utf8]{inputenc}
\usepackage{graphicx}
\usepackage{booktabs}
\usepackage{enumerate}
\usepackage[font={small},labelfont={bf}]{caption}
\usepackage{multirow}
\usepackage{subfigure}
\usepackage{url}
\usepackage{caption}
\usepackage{subcaption}
\usepackage[figurename=Figure]{caption}
\usepackage{bbm} 
\usepackage{physics}
\newcommand{\pZTL}[1]{{\fontfamily{lmss}\selectfont pZTL}}
\newcommand{\ZTL}[1]{{\fontfamily{lmss}\selectfont ZTL}}
\theoremstyle{plain} 
\newtheorem{theorem}{Theorem}[section]
\newtheorem{definition}{Definition}[section]
\newtheorem{corollary}{Corollary}[section]
\newtheorem{proposition}{Proposition}[section]
\newtheorem{lemma}{Lemma}[section]
\newcommand{\ED}[1]{\textcolor{orange}{Ellen: #1}}

\usepackage[pdftex, pdftitle={Article}, pdfauthor={Author}]{hyperref}

\begin{document}

\bibliographystyle{apsrev4-2}

\title{Exponential improvement in benchmarking multiphoton interference}

\author{Rodrigo M. Sanz}
\email{rmarsan3@upv.edu.es}
\affiliation{Universitat Politècnica de València, Camino de Vera s/n, 46022 Valencia, Spain}
\author{Emilio Annoni}
\email{emilio.annoni@quandela.com}
\affiliation{ Centre for Nanosciences and Nanotechnology, Université Paris-Saclay, UMR 9001,10 Boulevard Thomas Gobert, 91120, Palaiseau, France}
\affiliation{Quandela, 7 Rue Léonard de Vinci, 91300 Massy, France}
\author{Stephen C. Wein}
\affiliation{Quandela, 7 Rue Léonard de Vinci, 91300 Massy, France}
\author{Carmen G. Almudéver}
\affiliation{Universitat Politècnica de València, Camino de Vera s/n, 46022 Valencia, Spain}
\author{Shane Mansfield}
\author{Ellen Derbyshire}
\email{ellen.derbyshire@quandela.com}
\author{Rawad Mezher}
\email{rawad.mezher@quandela.com}
\affiliation{Quandela, 7 Rue Léonard de Vinci, 91300 Massy, France}

\begin{abstract}
Several photonic quantum technologies rely on the ability to generate multiple indistinguishable photons. Benchmarking these photons' level of indistinguishability is essential for scalability. The Hong–Ou–Mandel dip provides a benchmark for the indistinguishability between two photons,  and extending this test to the multi-photon setting has so far resulted in a protocol that computes the genuine $n$-photon indistinguishability (GI). However, this protocol has a sample complexity increasing exponentially with the number of input photons for an estimation of GI up to a given additive error. 

To address this problem, we first introduce new theorems that strengthen our understanding of the relationship between distinguishability and the suppression laws of the quantum Fourier transform  interferometer (QFT). Through this, we propose  a protocol using the QFT for benchmarking GI, and which  achieves constant sample complexity for the estimation of GI to a  given additive error for prime photon numbers,  and sub-polynomial scaling otherwise; an exponential improvement over previous state-of-the art. We prove the optimality of our protocol in many relevant scenarios and validate our approach experimentally on Quandela's reconfigurable photonic quantum processor, where we observe a clear advantage in runtime and precision over the state-of-the-art. We therefore establish the first scalable method for  computing  multi-photon indistinguishability, and which applies naturally to current and near-term photonic quantum hardware.

\end{abstract}

\keywords{Photon indistinguishability, quantum optics}

\maketitle

\section{Introduction} \label{sec:intro}
Photonic quantum technology has made significant strides in recent years, driven by advances in single-photon sources \cite{dingHighefficiency2025, thomasBright2021, alexanderManufacturable2025, somaschi2016near} and integrated optical architectures \cite{maringVersatile2024, zhongQuantum2020, pelucchiPotential2022}. Crucial to the development of this promising technology is the generation and manipulation of multiple \emph{perfectly} \emph{indistinguishable} photons. Partial distinguishability can fundamentally undermine the usefulness of a multi-photon state~\cite{renemaEfficient2018, renemaClassical2019} and introduces a barrier to creating quantum interference. However, state-of-the-art single-photon sources still produce particles that are, to some degree, distinguishable. Therefore, understanding~\cite{menssenDistinguishability2017, tichyManyparticle2012, tichySampling2015, tichyZeroTransmission2010, shchesnovichPartial2015} and characterizing~\cite{springChipbased2017, pontQuantifying2022, rodariSemiDeviceIndependent2025} the physics of imperfect multi-photon interference is an essential and ongoing area of study. To continue scaling up photonic quantum computers without compromising on their computing power, it is important to develop an efficient and reliable way to benchmark multi-photon indistinguishability.

The reference benchmark for indistinguishability is the Hong-Ou-Mandel (HOM) dip~\cite{hongMeasurement1987}. In this work, it is established that the level of indistinguishability between two photons can be determined from a binary interference pattern. However, this has been shown to be insufficient for characterizing multi-photon interference~\cite{menssenDistinguishability2017}. Scalable techniques \cite{brodWitnessing2019} have been developed that give upper and lower bounds on the probability  of $n$ photons being perfectly indistinguishable,  a metric known as \emph{genuine $n$-photon indistinguishability (GI)}~\cite{menssenDistinguishability2017}. However, up until now, the only  method we are aware of for \emph{exactly} computing GI was that introduced by Pont et al.~\cite{pontQuantifying2022}. Their method relies on a cyclic integrated interferometer (CI) and a post-selection strategy that homogenizes the output probability for any (partially) distinguishable input state as soon as there is at least one distinguishable photon at input. Thus, allowing for a discrimination between two scenarios: fully indistinguishable input and anything else. This creates a multi-photon HOM-like interference pattern for their post-selected output configurations allowing them to extract GI. The drawback of this method is that the post-selected outputs are rarely observed, leading to a measurement protocol with a sample complexity scaling exponentially as $O(4^{n})$ for estimating GI up to a given fixed  additive error.

In this work we introduce a  protocol for measuring GI based on a quantum Fourier transform (QFT) interferometer. Our protocol builds on a significant body of work exploring the relationship between the QFT and distinguishability~\cite{tichyZeroTransmission2010, dittelTotally2018, saiedGeneral2024,menssenDistinguishability2017, bezerraFamilies2023}. For fully indistinguishable inputs with the QFT, certain output configurations are suppressed, dictated by the zero-transmission laws (ZTL)~\cite{tichyZeroTransmission2010, tichyManyparticle2012}. We present several new theorems based on  the key insight that for partially distinguishable input states these ``suppressed'' output configurations are populated with a uniform probability. By post-selecting on these outcomes, we obtain a direct estimator for GI with a protocol that is \emph{exponentially more efficient} than the one based on the CI. Indeed, we show that  the sample complexity of our protocol is $O(1)$, and is \emph{optimal},  for estimating GI to a given fixed additive precision for prime $n$. For all other $n$, the same sample complexity scales as $f(n)$, where $f(n)$ is sub-polynomially small in $n$.

We confirm our theoretical conclusions with an experimental comparison between a QFT and CI implementation on one of Quandela's photonic quantum processors~\cite{maringVersatile2024}, for $n=3$ and $n=4$ photons. Remarkably, we find that even at these small system sizes and without photon number resolving detectors, our protocol has a shorter running time and provides a more precise measurement of GI. We therefore envision that our protocol will become a standard tool for benchmarking multi-photon interference.

\section{Background}
\label{sec: background}

Consider the following $n$-photon input state 

\begin{equation}\label{eq: dist_state}
    \rho = c_1\rho^\parallel + \sum\limits_k c_k \rho_k^\perp\enspace ,
\end{equation}
with $c_1 + \sum_k c_k = 1$, where $\rho^\parallel$ is the fully indistinguishable state. The fully indistinguishable state is weighted by a probability $c_1$, which is exactly equal to the genuine $n$-photon indistinguishability (GI). The states $\rho^\perp_k$ are pure states called  partition states \cite{annoniIncoherent2025} (see also \cite{brodWitnessing2019, giordani2020experimental}). Note that for distinct states  $\sigma_1, \sigma_2 \in \{\rho^{\parallel}, \{\rho_k^{\perp}\} \}$,  $\mathsf{Trace}(\sigma_1, \sigma_2)=0$. Partition states, a term introduced by Annoni and Wein~\cite{annoniIncoherent2025}, are $n$-photon states that are \emph{partitioned} into perfectly distinguishable subsets, containing one or more photons that are indistinguishable within each subset. 

While a generic $n$-photon state need not follow the above decomposition~\cite{menssenDistinguishability2017, shchesnovichCollective2018}, nevertheless it is always possible to \emph{twirl} such a  state into the form of Eq.\ref{eq: dist_state} by performing mode permutations of the state, and averaging over these~\cite{annoniIncoherent2025}. This is akin to the Pauli twirling procedure used in the qubit gate model, where coherent error channels are twirled into incoherent Pauli error channels \cite{wallmanNoise2016}.

Pont et al.~\cite{pontQuantifying2022}
proposed an interferometer, called the cyclic  interferometer (CI),  to estimate $c_1$. Their setup ensures that, for a specific set of outputs, all $\rho_k^\perp$ as input states will yield identical output probabilities. The protocol therefore \emph{post-selects} on these outputs, leading to a post-selected probability of the form 
\begin{equation}\label{eq: discrim_probability}
    P = c_1P_i + (1-c_1)P_d \enspace ,
\end{equation}

where $P_i$ is the output probability of the post-selected events given a fully indistinguishable input,  and $P_d$  is the output probability of the post-selected events for any $\rho^{\perp}_k$ at input. Their choice of post-selected outputs results in a probability of $P=\frac{1}{2^n}(1+\theta c_1),$ where $\theta \in [-1,1]$ is an adjustable parameter of the circuit. From the factor $\frac{1}{2^n}$ it should be clear that a reliable estimation of $c_1$ (to within a fixed additive error) requires an estimation of $P$ to an exponentially small (in $n)$ precision. By invoking standard statistical arguments \cite{hoeffdingProbability1963}, it is easy to show that an exponential number of runs of the above  protocol are needed to reliably estimate $c_1$. This exponential scaling thus hinders the scalability of this method beyond a few photons.

In  our work, we  seek a similar protocol to that of~\cite{pontQuantifying2022} in terms of post-selecting on specific outputs of a well-designed interferometer. However, in addition to that we require our protocol to be \emph{efficient}, that is we require the number of samples needed to estimate $c_1$ to a fixed precision to scale \emph{at most} as a very slow growing function of $n$.  

More precisely, we will look for an interferometer $U$ and a set of post-selected outputs $\mathcal{S}$ that satisfy Eq.~\ref{eq: discrim_probability}, with the constraint that $P_i=0$, and $P_d$ is maximized.  Viewed in this way, our problem can be cast as finding an \emph{optimal} state discrimination protocol~\cite{barnettQuantum2009}, implementable with linear optics, between $\rho^{\parallel}$  and $\{\rho^{\perp}_k\}$, $\forall k$.

A particular case of this type of discrimination task was considered by Stanisic and Turner~\cite{stanisicDiscriminating2018}, where the goal was to discriminate between $\rho^{\parallel}$ and a partition state $\rho^s$ with exactly one photon distinguishable from the remaining ($n-1$) indistinguishable photons. The maximum probability, known as the \emph{optimal} success probability, achievable with any set of measurement settings (post-selected outputs) and any interferometer and  for $\rho^s$ at input was shown \cite{stanisicDiscriminating2018} to be $1-\frac{1}{n}$ achieved when using post-selected output probabilities of a QFT. In our work, we will build upon this result, and on subsequent works, highlighting the importance of the zero-transmission laws (ZTLs)~\cite{tichyManyparticle2012, tichyZeroTransmission2010,saiedGeneral2024} (see Section~\ref{sec: ZTLs_main}) for fully indistinguishable inputs.

\subsection{Setting}
We represent any pure $n$-photon state in $m$ modes as a vector $\vec{s} = (s_1, ..., s_m)$, with $s_j$ the number of photons in mode $j$. 

A \emph{partition state} can be further decomposed as a sum of  vectors $\vec{s}^{\ (i)}$ representing the internal states of the subsets, we refer to these as \emph{registers} indexed with $i$, such that
\begin{equation}
    \vec{s} := \sum_i \vec{s}^{\ (i)} \enspace ,
\end{equation}
where $\vec{s}^{\ (i)} := (s_1^{(i)}, ..., s_m^{(i)})$ and $ \sum_i s_j^{(i)} = s_j$. Photons in each register $i$ are fully indistinguishable among themselves, and fully distinguishable from photons in register $j \neq i$.

An important class of partition states are \emph{$t$-periodic partition states} which are decomposed as
\begin{equation}
    \vec{s}_t := \sum_i \vec{s}_{t_i}^{\ (i)} \enspace ,
\end{equation}
where $\vec{s}_{t_i}^{\ (i)}$ is a state in register $i$ that has a repeating pattern of length $t_i$ of single indistinguishable photons in each mode, e.g. $\vec{s}_{t_i = 2}^{(i)} = (1_i, 0, 1_i, 0)$. The total period of the vector $\vec{s}_t$ is $t= \mathsf{lcm}(t_1, \dots,t_K)$, $K$ being the number of registers.

Finally, we define the \emph{mode-assignment vector} which, for any state vector $\vec{s}$, is:
\begin{equation}
    d(\vec{s}) := (d_1 (\vec{s}), ..., d_n (\vec{s})) \enspace ,
\end{equation}
where $d_i(\vec{s}) \in \{1, ..., m\}$ with $d_i (\vec{s}) \leq d_{i+1} (\vec{s})$ labelling the modes of each photon. For example, let $\vec{s} = (2, 1, 0, 1)$, then $ \ d(\vec{s}) = (1, 1, 2, 4)$.

As a final example, consider the case where $m=n$ and exactly one photon occupies each mode. This specific setting will be used later on to derive our results. In this case, a fully indistinguishable state, where a pattern of one photon in a single register is repeated $n$ times, has $t=1$.  In contrast,  a fully distinguishable state, where there is no repeated pattern across any register, is non-periodic and  has a period $t=n$. Indeed, and by  convention, we will say a non-periodic state always has a period of $t=m$.

\subsection{Zero-transmission Laws}\label{sec: ZTLs_main}
Tichy et al.~\cite{tichyManyparticle2012, tichyZeroTransmission2010} introduced suppression laws for a fully indistinguishable $n$-photon state passing through a multi-port balanced beamsplitter. The corresponding $m \times m$ unitary matrix is the QFT, defined as:
\begin{equation}\label{eq: QFT_def}
    QFT_m = \frac{1}{\sqrt{m}}\begin{bmatrix}
        1 & 1 & \cdots & 1 \\
        1 & \omega^1 & \cdots & \omega^{m-1}\\
        1 & \omega^2 & \cdots & \omega^{2(m-1)} \\
        \vdots & \vdots & \ddots & \vdots \\
        1 & \omega^{m-1} & \cdots & \omega^{(m-1)(m-1)}
    \end{bmatrix} \enspace ,
\end{equation}

where $\omega=e^{2 \pi i/m}$. 

The suppression laws, known as the zero-transmission laws (ZTL), are dictated by the \emph{Q-value} of an output configuration $\vec{s}$
\begin{equation}\label{Q_value}
    Q(\vec{s}): = \mod\left(\sum\limits_i d_i(\vec{s}), m\right) \enspace ,
\end{equation}
where $\sum\limits_i d_i(\vec{s})$ is the sum of the elements of vector $d(\vec{s})$. 

Given an input state $\vec{r}$ passed through a QFT interferometer, we denote the probability of obtaining an output state $\vec{s}$ as $P_{\vec{r} \to \vec{s}}$. The ZTLs~\cite{saiedGeneral2024} are then defined as follows.

\pZTL{} (\emph{periodic ZTL}). For any $t$-periodic input state of fully indistinguishable photons $\vec{r}_t$:
\begin{equation}\label{eq: pZTL}
    Q(\vec{s}) \neq 0 \ \text{mod} \ \left(\frac{m}{t}\right) \implies P_{\vec{r}_t \to \vec{s}} = 0 \enspace.
\end{equation}
This means that the output configuration $\vec{s}$ will be strictly suppressed if the above is true. 

A more suitable form of the ZTL for our purposes is the following.

\ZTL{}. For an $n=m$ fully indistinguishable input state with one photon per mode:
\begin{equation}\label{eq: ZTL}
    Q(\vec{s}) \neq 0  \implies P_{\vec{r}_{t=1} \to \vec{s}} = 0 \enspace,
\end{equation}
where we have denoted as $\vec{r}_{t=1}$ the input state of $n$ indistinguishable photons, which has period $t=1$, as explained in the example of the previous section.

\subsection{Transition probability}
Our results rely on proving uniform patterns in the transition probability for our proposed set-up. Let $U \in \mathbb{C}^{m \times m}$ represent the scattering matrix of a linear optical interferometer and let $d(\vec{r}), d(\vec{s}) \in [n]^m$ be the mode-assignment vectors for an input and output photon state, respectively. The transition probability from input $\vec{r}$ to output $\vec{s}$ is recovered from the effective scattering matrix, which is a sub-matrix of $U$
\begin{equation}
\label{eq: M_matrix}
    M_{d(\vec{r}), d(\vec{s})} \in \mathbb{C}^{n\times n} 
    \enspace ,
\end{equation}
formed by first taking $\vec{r}_i$ copies of the $i_{th}$ column of $U$ and then taking $\vec{s}_j$ copies of the $j_{th}$ row constructed in the previous step~\cite{aaronsonComputational2011}.

One of our contributions is to derive the transition probability for a partition state with $K$ distinguishable registers, using the formalism of Tichy et al.~\cite{tichySampling2015}, we show that 
\begin{equation}
    \label{eq: part_probability}
    P_{\vec{r} \to \vec{s}}= \sum\limits_{\{\vec{s}^{(j)}\} \in \mathcal{S}}  \left(\prod_{j= 1}^K N_j|\mathsf{Per}(M_{d(\vec{r}^{(j)}), d(\vec{s}^{(j)})})|^2\right) \enspace ,
\end{equation}
where the normalization factor for each register is $N_j = 1/\prod_i r_i^{(j)} ! s_i^{(j)}!$ and accounts for multiple occupancy of modes. $\mathsf{Per}(.)$ is the  matrix permanent, which is used in computing output probabilities of indistinguishable photons passed through an interferometer \cite{aaronsonComputational2011}.  Here, we sum over all combinations of register outputs $\vec{s}^{\ (j)}$ that result in output $\vec{s}$, denoted by the set $\mathcal{S}$. This is derived in detail in the Supplementary Material (S.M.), Section~\ref{sec: transition_prob_derivation}. Intuitively, Eq.~\ref{eq: part_probability} shows that we can treat the output probabilities of partition states as products of output probabilities of sets of indistinguishable photons defined by the different registers. 

\section{Main Results} \label{sec:results}

We propose an efficient and scalable protocol for computing $c_1$, the GI, based on the output probabilities of the QFT interferometer. Our input is an $n$-photon state $\rho$ of the form of Eq.~\ref{eq: dist_state} that is passed through a QFT interferometer with mode number $m=n$. The input state is such that each mode is occupied by exactly one photon. Following intuition laid out in Section~\ref{sec: background}, we propose to post-select outcomes that \emph{would} be suppressed by the \ZTL{} \emph{if} the input state were indistinguishable. In order to do this, we require photon-number-resolving (PNR) detectors. By introducing several new theorems, we prove that with our protocol, we can estimate the value of $c_1$ up to \emph{any} additive error $\epsilon$ with high confidence. 

When $n$ is prime, and the desired precision $\epsilon$ is fixed, we show that our method achieves an \emph{optimal} sample complexity of $O(1-\frac{1}{n})$ simplifying to $O(1)$ as $n$ grows. The overall sample complexity for computing $c_1$ to \emph{any} additive error $\epsilon$ is therefore $O(\frac{1}{\epsilon^2})$, as determined by Hoeffding's inequality \cite{fisherCollected1994}.

For arbitrary $n$, and again with fixed precision $\epsilon$, we show that there is a classical postprocessing protocol for estimating $c_1$, that runs in time $O(n)$ and has sample complexity $o(n)$. Consequently, the total sample complexity to estimate $c_1$ within error $\epsilon$ is then $o(\frac{n}{\epsilon^2})$.

The setup for our protocol is illustrated in Figure~\ref{fig: protocol}.

\begin{figure}[H]
\centering    \includegraphics[width=0.9\columnwidth]{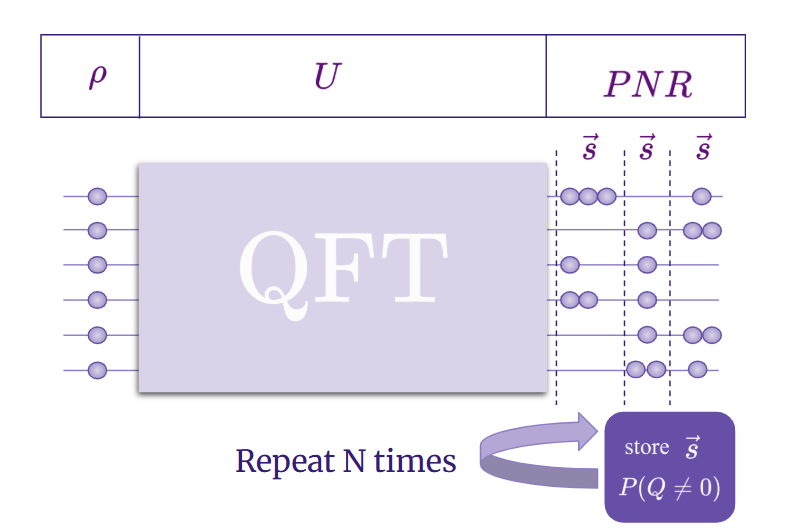}
    \caption{Illustration of the protocol. An $n$-photon state is input to an $n$-mode QFT interferometer, and certain output events are post-selected on using photon-number-resolving (PNR) detection. The estimated output probabilities from these events are then used to estimate $c_1$.}
    \label{fig: protocol}
\end{figure}

\subsection{Computing relevant  probabilities}
In order to establish our results, we first analyze the behavior of the probabilities required for our protocol. 
An important component of any distinguishability analysis is the distinguishability matrix. Given an $n$-photon state, we denote two photons in this state by single-photon pure states $|r_k^{(i)}\rangle$ and $|r_l^{(j)}\rangle$ in registers $i,j$.  The distinguishability matrix~\cite{tichyManyparticle2012} of such a state, has elements defined as 
\begin{equation}
    \mathrm{S}_{k,l} = \langle r_k^{(i)}| r_l^{(j)}\rangle = \delta_{i,j}\enspace .
\end{equation} 
Where, for two indistinguishable photons the element $\mathrm{S}_{k,l} = 1$ and zero otherwise.

We denote the probability of observing outputs such that $Q=k$ with $$P(Q = k):=\sum_{\vec{s} \big| Q(\vec{s})=k} P_{\vec{r} \to \vec{s}} \enspace .$$ 

One of our  crucial  findings is that the probability $P(Q=k)$ is actually dictated by the \emph{periodicity} of the input partition state. More precisely, we prove the following theorem.

\begin{theorem}[Q-marginals probability]\label{thm: q_marginals_prob_main}
    Let $\sigma \in \{\rho^{\parallel}, \rho^{\perp}_k\}$ be an $n$-photon partition state.
    The Q-values produced by $\sigma$ when passed through an $n$-mode QFT interferometer with exactly one input photon per mode are 
    \begin{equation}
    \label{eq: q_marginals_main}
        P(Q=k) = \frac{1}{n}\sum_{j=0}^{n-1}\omega^{kj} \left[ \prod\limits_{i=1}^n \mathrm{S}_{i, i+j}\right], 
    \end{equation}
    where $\mathrm{S}_{i, i+j}$ are elements of the distinguishability matrix.
\end{theorem}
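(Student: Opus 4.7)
The plan is to use Fourier inversion on $\mathbb{Z}_n$ to reduce the $Q$-marginal to $n$ expectation values of the form $\mathbb{E}[\omega^{jQ(\vec s)}]$, and then exploit the defining property of the QFT as the diagonaliser of cyclic mode shifts to rewrite each of these expectations as a single inner product on the input Fock state.

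Concretely, I would start from the character-sum identity
\begin{equation*}
\mathbbm{1}[Q(\vec s)=k]=\frac{1}{n}\sum_{j=0}^{n-1}\omega^{-jk}\,\omega^{jQ(\vec s)},
\end{equation*}
which yields $P(Q=k)=\tfrac{1}{n}\sum_j \omega^{-jk}\,\mathbb{E}[\omega^{jQ(\vec s)}]$. Because $Q(\vec s)=\sum_k k\,s_k \pmod n$, the phase $\omega^{jQ(\vec s)}$ is exactly the eigenvalue on the Fock state $|\vec s\rangle$ of the passive mode-wise phase operator $\hat D_j$ defined by $\hat D_j\,a_k^\dagger\hat D_j^{-1}=\omega^{jk}a_k^\dagger$. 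Writing $|\Psi_{\mathrm{out}}\rangle=U|\Psi_{\mathrm{in}}\rangle$, where $U$ is the QFT and $|\Psi_{\mathrm{in}}\rangle=\prod_{i=1}^n a_i^\dagger[\phi_i]|0\rangle$ is a pure representative of the partition state $\sigma$ (with internal modes $|\phi_i\rangle$ satisfying $\langle\phi_i|\phi_j\rangle=\mathrm{S}_{ij}$), this gives
\begin{equation*}
\mathbb{E}[\omega^{jQ}]=\langle\Psi_{\mathrm{in}}|\,U^{\dagger}\hat D_j U\,|\Psi_{\mathrm{in}}\rangle.
\end{equation*}

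The core step is the standard single-particle identity $U^\dagger D_j U=C^{j}$ (exactly, in the $0$-based indexing; any global phase produced by a different index convention is absorbed by the relabelling $j\mapsto j'$ of the outer sum), where $C$ is the cyclic shift with $C\,a_k^\dagger C^{-1}=a_{k+1}^\dagger$. This reduces the expectation to $\langle\Psi_{\mathrm{in}}|\hat C^{j}|\Psi_{\mathrm{in}}\rangle=\langle 0|\prod_i a_i[\phi_i]\,\prod_k a_k^\dagger[\phi_{k-j}]|0\rangle$, and a one-line Wick contraction using $[a_k[\alpha],a_l^\dagger[\beta]]=\delta_{kl}\langle\alpha|\beta\rangle$ together with the one-photon-per-mode structure of $|\Psi_{\mathrm{in}}\rangle$ produces
\begin{equation*}
\mathbb{E}[\omega^{jQ}]=\prod_{i=1}^n \langle\phi_i|\phi_{i-j}\rangle=\prod_{i=1}^n \mathrm{S}_{i,i-j}.
\end{equation*}
Substituting this back into the character sum and reindexing $j\mapsto -j$ (using that $\mathrm{S}$ is real-symmetric for partition states, so that $\prod_i\mathrm{S}_{i,i-j}=\prod_i\mathrm{S}_{i,i+j}$) yields Eq.~\ref{eq: q_marginals_main}.

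The calculation is essentially routine once the Fourier reduction is in place; the main source of care is bookkeeping the global phase in the QFT--shift identity so that it is consistently absorbed by the index convention and the relabelling of $j$, which I would verify explicitly in the Supplementary Material. A minor point worth noting is that the Wick step relies on $|\Psi_{\mathrm{in}}\rangle$ being a pure state with one photon per mode (both of which are part of the theorem's hypotheses and the partition-state formalism of Section~\ref{sec: background}), and that no assumption about collision-free outputs is needed because the diagonal-phase argument treats all output Fock configurations on the same footing.
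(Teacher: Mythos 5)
Your proposal is correct, and it takes a genuinely different route from the paper's proof. The paper proceeds by brute-force expansion: it writes $P(Q=k)$ as a sum over output configurations of permanents of Hadamard products $M \odot M^{*}_{\tau,\mathbb{I}}$, defines coefficients $c_\tau(Q=k)$, and then shows by an explicit enumeration of mode-assignment tuples and a roots-of-unity cancellation that $c_\tau$ vanishes unless $\tau$ is a cyclic shift $\tau(i)=i+j$, in which case $c_\tau=\omega^{kj}/n$; a nontrivial part of that argument is the careful bookkeeping of the $\prod_j s_j!$ symmetry factors for bunched outputs via the quotient $\mathsf{S}_n/\mathcal{S}_{\vec s}$. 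You instead apply Fourier inversion on $\mathbb{Z}_n$ to reduce the marginal to the expectations $\mathbb{E}[\omega^{jQ}]=\langle\Psi_{\mathrm{in}}|U^\dagger \hat D_j U|\Psi_{\mathrm{in}}\rangle$ and use the defining property of the QFT as the diagonaliser of cyclic shifts, $U^\dagger D_j U = C^j$, so that each expectation collapses to a single Wick contraction $\prod_i\langle\phi_i|\phi_{i-j}\rangle$. This is cleaner and more structural: it makes manifest \emph{why} only shift permutations survive (they are the image of the diagonal phases under conjugation by the QFT), and it sidesteps the collision/normalization bookkeeping entirely because $\hat D_j$ is a bona fide operator whose expectation automatically weights bunched outputs correctly. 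The one point to state explicitly when writing it up is the index-convention phase in $U^\dagger D_j U = C^j$ (the paper's QFT is effectively $1$-based), which, as you note, only contributes a global phase absorbed by relabelling $j$, and the fact that $\prod_i \mathrm{S}_{i,i-j}=\prod_i \mathrm{S}_{i,i+j}$ follows from cyclically relabelling $i$ together with the symmetry of $\mathrm{S}$.
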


The proof of this theorem relies firstly on expanding $P(Q=k)$ as a sum of permanents corresponding to the post-selected output probabilities. Then, we show that each of these permanents reduces to an $n$th root of unity. Finally, we show that only \emph{shift permutations} that map photon $i$ to $i+j \mod n$, for a fixed $j$, contribute to $P(Q=k)$. Together, these reduce $P(Q=k)$ to the sum in Eq.~\ref{eq: q_marginals_main}. The full proof is detailed in S.M. Section~\ref{sec: probabilities}

This result enables us to show that the probability for $t$-periodic partition states becomes
\begin{equation}\label{eq: Q-marginals-t}
    P(Q=k) = \begin{cases}
        \frac{1}{t} \ \text{if} \ k=0 \ \text{mod} \ \frac{n}{t} \\ 
        0 \  \text{otherwise}
    \end{cases} \enspace .
\end{equation}
This displays the uniform behavior of output probabilities when input states are of the same periodicity, which we prove in S.M. Section~\ref{sec: partition_results}.

\subsection{Computing $c_1$ for prime $n$}

To estimate $c_1$, we first look at the case when $n$ is prime.

We reformulate our problem in terms of our post-selection strategy. By applying the projector $O=\sum_{s \big | Q(\vec{s}) \neq 0} |s\rangle \langle s|$ to the $n$-photon state in Eq.~\ref{eq: dist_state}, we find that
 
\begin{equation}
\label{eqinitQ}
    P(Q \neq 0)=c_1P_i(Q \neq 0)+\sum_kc_kP_k( Q \neq 0).
\end{equation}

As in the case of the CI \cite{pontQuantifying2022}, our goal is to achieve the following simple form for the post-selected probability

\begin{equation}\label{eq: prob_Q_main}
    P(Q\neq0)=c_1 P_{i}(Q\neq0) + (1-c_1)P_{d}(Q\neq0) \enspace .
\end{equation}

Meaning that we require $\forall k$ that we have the same probability, i.e. $P_{k} ( Q \neq 0)=P_d( Q \neq 0)$. Crucially, and in order to improve exponentially over the performance of the CI, we need $P_d( Q \neq 0)$ to be sufficiently large. For the case when $n$ is prime we prove that Eq. \ref{eq: prob_Q_main} holds. Furthermore, in this case we prove that $P_d(Q \neq 0)$ is significantly large and in fact its magnitude is optimal (see Section \ref{sec:opt}). Namely, we show that    

\begin{equation}\label{eq: analytic_prob}
\begin{split}
    &P_i(Q\neq 0) = 0, \\ 
    &P_d(Q\neq 0) = 1 - 1/n  \enspace .
\end{split}
\end{equation}

We prove the above analytical probabilities in S.M. Section~\ref{sec: partition_results}, using Eq.~\ref{eq: Q-marginals-t} and the fact that $t\in\{1, n\}$ when $n$ is prime.

With these results we can show that  $P(Q\neq0)=(1-c_1)(1-1/n)$. Thus,  we can compute $c_1$ by determining the post-selected probability $P(Q\neq0)$ experimentally and substituting in our probabilities from Eq.~\ref{eq: analytic_prob} to get
\begin{equation}\label{eq: compute_c1}
    c_1 = \frac{(1-1/n)-P(Q\neq0)}{1-1/n} \enspace .
\end{equation}

\subsection{Computing $c_1$ for  $n$ non-prime}\label{sec: non-prime-main}

When $n$ is non-prime, the partition states can have several possible periods $t \in \{1, \dots, n\}$. It is therefore necessary to first re-write Eq.~\ref{eq: dist_state} taking into account all of these different periods \footnote{Note that because $m=n$ and we have exactly one photon per mode in the input, the set of possible periods  is equal to the  set of divisors of $n$.}
\begin{equation}\label{eq: non-prime-rho}
     \rho = c_{t_1}\cdot \rho_{t_1} + \sum_j c_{j,t_2}\cdot \rho_{j,t_2} + ...+\sum_jc_{j,t_x(n)}\cdot  \rho_{j,t_x(n)} \enspace ,
\end{equation}
where $t_1, ..., t_{x(n)}$ are the $x(n)$ divisors of $n$, $t_1=1$ and $t_{x(n)}=n$. Whilst $\rho_{j,t_i}$ are partition states with period $t_i$. Here, $\rho_{t_1}$ is the fully indistinguishable state, with periodicity $t_1=1$, and $c_{t_1}=c_1$ (GI). 

Applying Eq.~\ref{eq: Q-marginals-t} which tells us that all states of the same period $t$ have equal probability to the above reformulated state, we obtain the following expression for the post-selected probability
\begin{equation}
\label{eq: non-prime-Q-main}
\begin{split}
    P(Q=k) &= P_{t_1}(Q=k)\cdot c_{t_1} + P_{t_2}(Q=k)
    \cdot c_{t_2}\\
    &+...+ P_{t_{x(n)}}(Q=k)\cdot  c_{t_{x(n)}}  \enspace ,
\end{split}
\end{equation}
where $c_{t_i}:=\sum_jc_{j,t_i}$. 

In contrast to Eq.~\ref{eq: prob_Q_main} where $P(Q\neq 0)$ is a convex combination of two probabilities, we can see that for any $k$ we will have a mixture over multiple components $P_{t_i}(Q=k)$. To manage this,   we replace the $P_{t_i}(Q=k)$ with their values according to Eq.~\ref{eq: Q-marginals-t}, and construct the following system of equations 

$$ A \vec{c}=\vec{P} \enspace ,$$

where $A=(a_{i,j})$ is a real-valued $n \times x(n)$ matrix with matrix elements defined by Eq.~\ref{eq: Q-marginals-t}
\begin{equation}
    a_{i,j} = P_{t_j}(Q=k) = \frac{1}{t_j}\delta_{i \ \text{mod} \frac{n}{t_j}} \enspace ,
\end{equation}
where $\delta_{i \text{ mod } \frac{n}{t_j}}=1$ if $  i= 0 \text{ mod } \frac{n}{t_j}$, and 0 otherwise,   $\vec{P}=(P(Q=1), \dots , P(Q=n))^T$ is a column-vector of the post-selected probabilities,  and 
$\vec{c}=(c_{t_1}, \dots, c_{t_{x(n)}})^T$ is a vector of  the coefficients.

Our strategy will inevitably be different from that of the prime case. We show that the columns of $A$ are linearly independent, which ensures that the system can be inverted using the Moore-Penrose pseudo-inverse~\cite{mooreReciprocal1920, penroseGeneralized1955} $A^+$. Notably, we also recover an analytic form for $A^+$ allowing us to recover $c_1$ by  using the experimental estimates of $P(Q=k)$ for all $k$. These results are derived in detail in S.M. Section~\ref{sec: non-prime_partition_results}.

\bigskip
An important aspect of our analysis is the sample complexity, which we discuss for both the prime and non-prime cases in the next section.

\subsection{Scaling results}

\subsubsection{Prime scaling is optimal}
\label{sec:opt}

 We have seen in the previous sections that we can compute $c_1$, for general partition states when $n$ is prime  using Eq.~\ref{eq: compute_c1}. In  these cases, the benchmarking task reduces to a discrimination problem between just two output distributions, where the success probability is given by $P_s=P_d-P_i=1-1/n$. Notably, this results in an asymptotic sampling cost scaling as $O(\frac{1}{P_s^2}) \in O(1)$, as seen in S.M. Section~\ref{sec: prime-scaling}. This means that the number of required shots $N_s$ to estimate $c_1$ with a fixed statistical error and  confidence remains independent of $n$. Such behavior is illustrated in Figure~\ref{fig: all_partitions_scaling}, where one sees that the lower bound for $N_s$ saturates at $\sim 1200$ samples. Note  that to plot this figure we required an estimation $\tilde{c_1}$ of $c_1$ such that  $P\!\left(\,|\tilde{c}_1 - c_1| \le \epsilon\,\right) \ge 0.995, \epsilon = 0.05$. Using the results of Stanisic and Turner~\cite{stanisicDiscriminating2018} we show that the scaling for prime $n$ is 
 
 in fact optimal.

\begin{theorem}[GI success probability]\label{thm: gi_optimal}
Let $n \in \mathbb{N}^+$, let $\{U, \mathcal{S}\}$ be an interferometer paired with a post-selection strategy that measures $c_1$.
Then $P(s\in \mathcal{S}) \leq 1-1/n$.
\end{theorem}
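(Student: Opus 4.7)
The plan is to reduce the claim to the binary-discrimination result of Stanisic and Turner~\cite{stanisicDiscriminating2018} by showing that any single post-selection strategy that measures $c_1$ must, in particular, be a valid discrimination protocol between $\rho^{\parallel}$ and the specific partition state $\rho^{s}$ in which exactly one photon is distinguishable from $n-1$ mutually indistinguishable photons.

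First I would unpack what it means for $\{U,\mathcal{S}\}$ to measure $c_1$ in the sense of Eq.~\ref{eq: discrim_probability}. The post-selection rate on $\mathcal{S}$ must take the linear-in-$c_1$ form $P = c_1 P_i + (1-c_1) P_d$, which requires two structural conditions: (i) every partition state $\rho_k^{\perp}$ produces the same post-selection probability $P_d$, and (ii) $P_i = 0$, so that $c_1$ is recoverable from a single measured contrast and the success probability we seek to bound is not diluted by false positives on the indistinguishable input. Under these conditions, $P(s\in\mathcal{S})$ coincides with $P_d$ whenever the input is any $\rho_k^{\perp}$.

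Next, I would single out from the set $\{\rho_k^{\perp}\}$ the specific partition state $\rho^{s}$ studied in~\cite{stanisicDiscriminating2018}, namely one photon distinguishable from $n-1$ mutually indistinguishable photons. Condition (i) forces our protocol to produce exactly $P_d$ on this input, while condition (ii) ensures no false positives on $\rho^{\parallel}$. Consequently, $\{U,\mathcal{S}\}$ restricted to inputs drawn from $\{\rho^{\parallel},\rho^{s}\}$ acts as a valid linear-optical zero-error discrimination strategy whose success probability equals $P_d$.

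Finally, I would invoke the Stanisic-Turner bound: the maximum success probability of such a discrimination task, optimized over all linear-optical interferometers and post-selection sets, is $1-\tfrac{1}{n}$, attained by the QFT. Therefore $P_d \leq 1-1/n$, which is the advertised inequality. The main obstacle will be the first step: justifying rigorously that any $\{U,\mathcal{S}\}$ that measures $c_1$ can, without loss of generality, be taken to satisfy (i) and (ii). For (i), uniformity across all $\rho_k^{\perp}$ is essentially built into extracting $c_1$ from a single contrast equation; for (ii), one has to argue that allowing $P_i>0$ cannot push $P(s\in\mathcal{S})$ above $1-1/n$, so that the reduction to Stanisic-Turner's zero-error setting is truly without loss of generality. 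Once this reduction is secured, the remainder is a direct invocation of~\cite{stanisicDiscriminating2018}.
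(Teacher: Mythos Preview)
Your proposal is correct and follows essentially the same route as the paper: single out the worst-case partition state $\rho^{s}$ (one photon distinguishable from $n-1$ indistinguishable ones), observe that any $\{U,\mathcal{S}\}$ measuring $c_1$ must in particular work on the mixture $c_1\rho^{\parallel}+(1-c_1)\rho^{s}$, and then invoke the Stanisic--Turner bound. The paper wraps this in a proof by contradiction rather than a direct reduction, and it simply assumes $P_i=0$ from the outset (this is part of how the paper defines ``measures $c_1$''), so your caveat about justifying condition~(ii) is a point the paper does not address either.
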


We prove this by considering the worst-case mixture of Eq.~\ref{eq: dist_state} and argue by contradiction. The details of this proof are in S.M. Section~\ref{sec: optimality}

In fact, the optimality of the success probability directly implies optimality of the sample complexity, since by Hoeffding's inequality \cite{hoeffdingProbability1963} we have that $N_s \in O(\frac{1}{P_s^2})$ for an estimate with a fixed additive error.

\subsubsection{Non-prime scaling}

In the non-prime case, we first assume that each estimate of $P(Q=k)$ is accurate up to an additive error $\epsilon$. By propagating this uncertainty through the application of $A^+$ one can derive that the error $\epsilon'$ on $c_1$ is amplified by a factor containing in its denominator Euler's totient function \cite{hardyIntroduction2008}, denoted as $\varphi(n)$. Indeed, in S.M. Section~\ref{sec: non-prime-scale}, we show that 
\begin{equation}    
\epsilon' \leq \epsilon \frac{n}{\varphi(n)}.
\end{equation}
For computing $c_1$ to a fixed additive error $\delta$, we therefore need $\epsilon = \frac{\delta}{\frac{n}{\varphi(n)}}$ additive error on the probability estimations, and therefore $N_s \in O( \frac{1}{\epsilon^2}) \in O( \frac{n^2} {\varphi(n)^2})$ \cite{hoeffdingProbability1963}. 

\begin{figure}[thb]
\centering    \includegraphics[width=1\columnwidth]{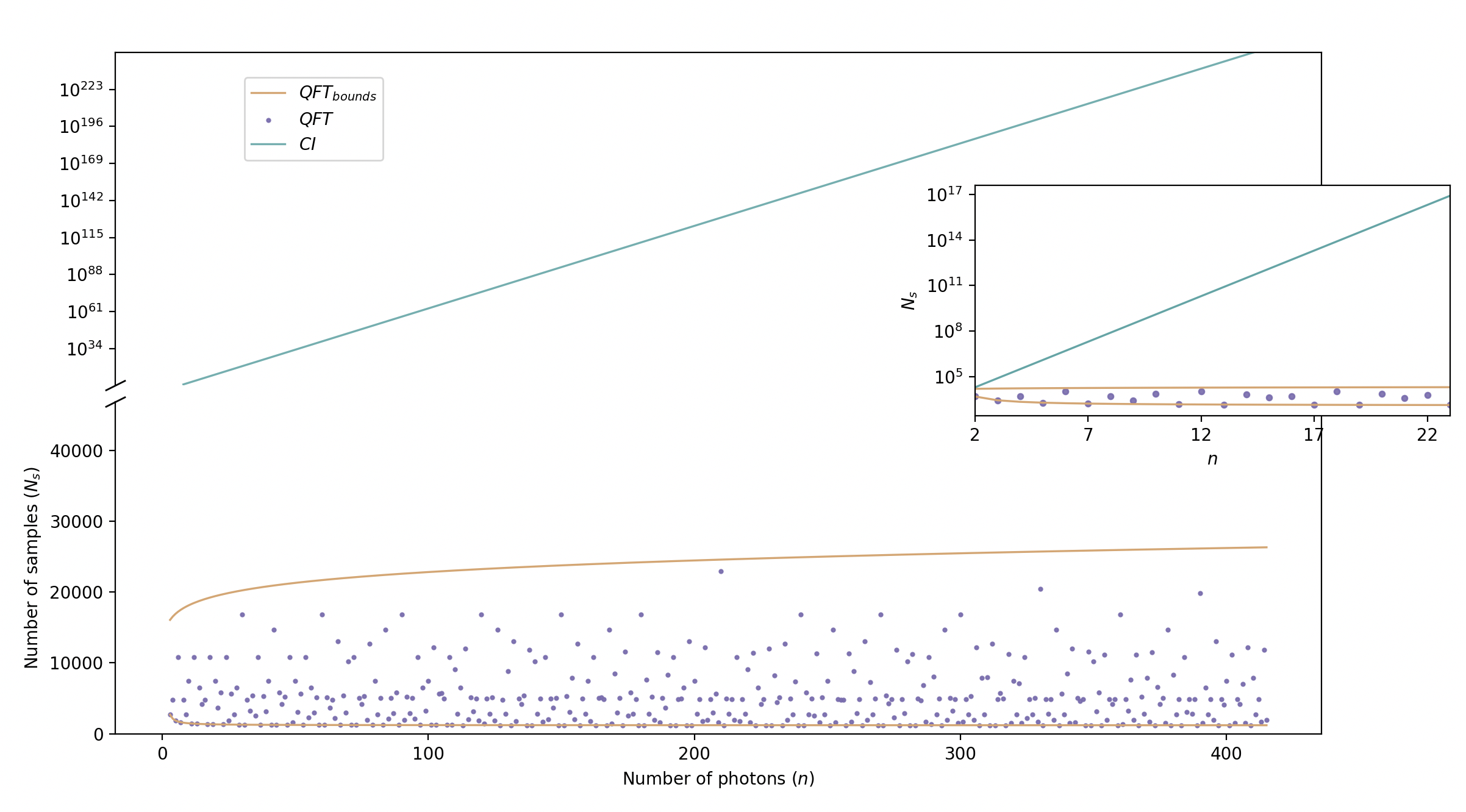}
    \caption{Illustration of the sampling requirements to benchmark genuine $n$-photon indistinguishability (GI) ($c_1$) with fixed additive error and statistical confidence. The cyclic interferometer (CI) exhibits exponential overhead, whereas our QFT-based protocol achieves near-constant complexity for prime $n$ and sub-polynomial growth for non-prime $n$ in the worst case. The number of samples is established using concentration bounds derived from a Hoeffding inequality \cite{hoeffdingProbability1963}. The upper and lower bound functions for the QFT (yellow lines) in the plot are proportional to $n^{\frac{1}{10}}$ and $(\frac{n}{n-1})^2$.}
    \label{fig: all_partitions_scaling}
\end{figure}
Interestingly, and as illustrated in Figure~\ref{fig: all_partitions_scaling}, $\frac{n^2}{\varphi(n)^2}$ equals  the  optimal scaling of $(\frac{n}{n-1})^2$ when $n$ is prime.
For non-prime $n$, $\frac{n}{\varphi(n)}$ is upper bounded  by $x(n)$  which is known to have a subpolynomial scaling \cite{hardyIntroduction2008}. Thus,  $\frac{n^2}{\varphi(n)^2} \in o(n)$.

In addition to the  sample complexity--- typically the most expensive for near-term devices ---our protocol for  non-prime $n$ also has a classical runtime cost: computing and applying the pseudoinverse. This task requires $O(n)$-time, and finding all $x(n)$ divisors of $n$ requires $O(\sqrt{n})$-time \cite{knuthArt1997}. This can be seen in S.M. Section~\ref{sec: non-prime-scale}. 

Table~\ref{tab:all_partitions_scaling} summarizes the sample complexity scalings of our protocol as compared to that of the CI.

\begin{table}[thb]
\centering
\begin{tabular}{lcc}
\toprule
\textbf{Protocol} & \textbf{$n$} & \textbf{Asymptotic scaling of $N_s$} \\
\midrule
CI  & all       & $O(4^n)$ \\
\addlinespace
QFT & prime     & $O(1)$ \\
    & non-prime & $o(n)$ \\
\bottomrule
\end{tabular}
\caption{Asymptotic sample complexity $N_s$ for computing the genuine $n$-photon indistinguishability (GI) ($c_1$). The scalings correspond to general states $\rho$ with support on arbitrary partition states, comparing our QFT-based protocol to the state-of-the-art CI protocol.}\label{tab:all_partitions_scaling}
\end{table}

\section{Experimental implementation}
To support our theoretical results, we implemented our protocol on the \emph{Ascella} photonic quantum processor~\cite{maringVersatile2024}, available on the Quandela cloud. It has a quantum-dot based single-photon source~\cite{thomasBright2021} with time-to-space de-multiplexing to inject up to six photons into a 12-mode universal reconfigurable linear-optical interforemeter. Outputs are monitored by 12 superconducting nanowire single-photon detectors (SNSPD) that can only detect whether photons are present in a mode, rather than how many.

We benchmarked the QFT$_n$ against the CI$_n$ for $n=3,4$ using pseudo-photon-number-resolving detection with integrated $1\times n$ beam splitters. To ensure identical inputs, CI$_n$ was padded with two modes of the Identity. In a single 14 hour run (to minimize noise drift) we obtained $c_1=0.815\pm0.001$ from $QFT_3$, $c_1=0.824\pm0.007$ from $CI_3$, $c_1=0.695\pm0.007$ from $QFT_4$ and $c_1=0.72\pm 0.01$ from $CI_4$. Our results are shown in Figure~\ref{fig:experiments}.

The two methods produced consistent results, though small discrepancies remain which we attribute to slow fluctuations of the processor alignment and periodic recalibration. Another possibility is that the input state slightly deviates from the partition representation. In the future, permutation twirling~\cite{annoniIncoherent2025} can be enforced on the $CI_n$ to ensure that both approaches measure the same value.

Although it is difficult to rigorously compare the efficiency of the two approaches due to the differences in PPNR and post-processing techniques, it is clear that the QFT$_n$ provided a more precise estimate of $c_1$ with a lower total runtime on the processor. This is most evident for $n=3$ where the runtime of the QFT$_3$ was $\sim 32$ minutes, less than a third of that of the CI$_3$ which was $\sim$ 162 minutes, whilst the resulting $c_1$ is 7 times more precise.
\label{sec: experiment}
\begin{figure*}
    \centering
    \includegraphics[width=\linewidth]{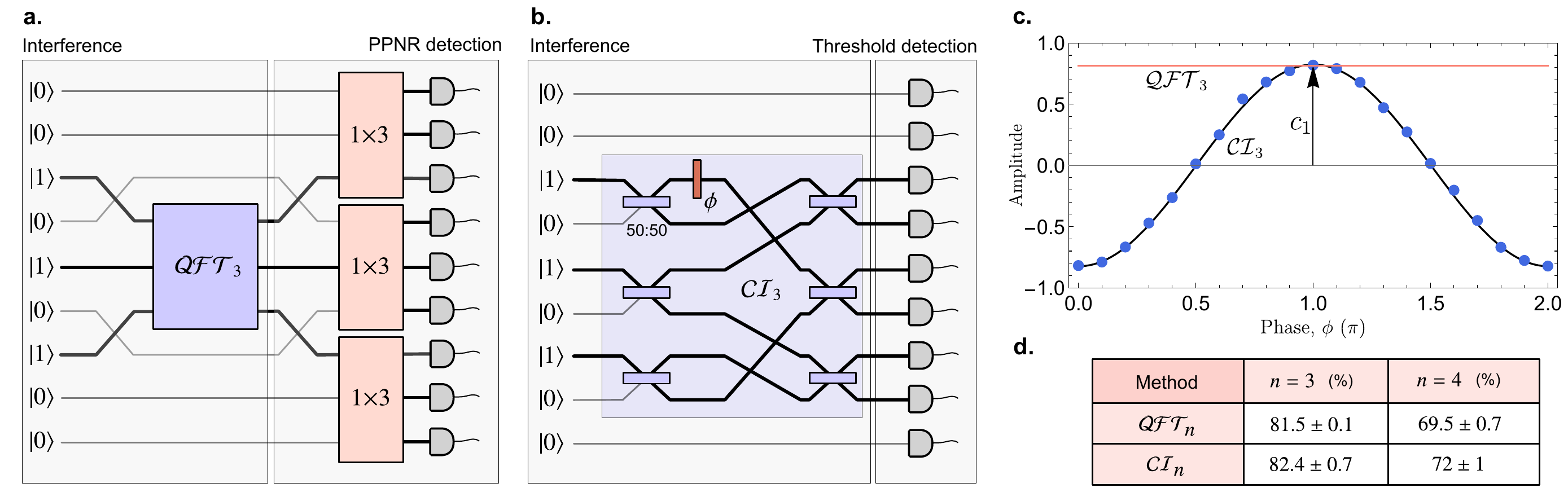}
    \caption{\textbf{Experimental implementation.} \textbf{a.} Interference between three single photons using $QFT_3$ along with pseudo-number resolving (PPNR) detection up to 3-photon resolution. \textbf{b.} Interference between three single photons using a cyclic interferometer ($CI_3$) tuned using a single phase shifter implementing the phase $\phi$. The desired outcomes contain no bunching and so direct threshold detection suffices. Note that the first two modes and the last mode went unused in this experiment, since the interferometer was shifted so that the same three photons were measured for both methods. \textbf{c.} The interference fringe obtained from the output of $CI_3$ whose amplitude of oscillation quantifies the genuine 3-photon indistinguishability $c_1$ (vertical arrow). The horizontal line notated by $QFT_3$ indicates the value of $c_1$ obtained from the experiment illustrated in panel \textbf{a.}. \textbf{d.} A table summarizing the $n=3$ and $n=4$ genuine multi-photon indistinguishabilities measured using the reconfigurable chip for both methods.}
    \label{fig:experiments}
\end{figure*}
\newpage
\section{Discussion}
\label{sec:discussion}

In summary, we have proposed a scalable protocol for directly computing GI, an important benchmark  of multi-photon interference.  
The capability  of our protocol to scale up is clearly reflected in Figure~\ref{fig: all_partitions_scaling}, where from $n=19$ onwards both the prime and non-prime QFT protocols already exhibit a reduction of  orders of magnitude in sampling cost compared to the CI. This substantial advantage highlights the impact of our protocol, which we envision  will become a practical and scalable tool for benchmarking multi-photon indistinguishability in near-term and future photonic platforms.

An interesting open question is proving the optimality of the sample complexity of  our protocol for the non-prime $n$ case. One observation we can make is that the ratio $\frac{n}{\varphi(n)}$ governing the sample complexity is equal to the optimal scaling of $\frac{n}{n-1}$ factor when $n$ is prime. This provides evidence that our protocol may be optimal for any $n$, although further investigation is required to confirm this conjecture. 

Up until now, most classical simulation algorithms for noisy boson sampling experiments have relied on simplified assumptions on the distinguishability error~\cite{renema2019simulability,moylett2019classically}. Our work could aid in developing similar algorithms with generic partition state inputs, which cover a wide variety possible input states and provide a realistic model of distinguishability. 

Finally, an important direction motivated by bringing GI on par with existing qubit benchmarks \cite{eisert2020quantum} is investigating the level of GI required for reliable quantum computing operations, such as  multi-qubit gates implemented in linear optics \cite{knill2001scheme,baldazzi2025universal,maringVersatile2024}, or indeed reliable multi-photon state preparations \cite{aralov2025photon,kopylov2025multiphoton}.

\section{Acknowledgements}
This work has been funded by the HorizonCL4 program under Grant Agreement No. 101135288 for the
EPIQUE project; by the QuantERA project ResourceQ
under Grant Agreement No. ANR-24-QUA2-007-003; and by the TUF-TOPIQC project financed by the French National Quantum Strategy (France 2030) program. RMS and CGA acknowledge support from the Ministry for Digital Transformation and of Civil Service of the Spanish Government through the QUANTUM ENIA project call - Quantum Spain project, and by the European Union through the Recovery, Transformation and Resilience Plan - NextGenerationEU within the framework of the Digital Spain 2026 Agenda.

We became aware of  independent upcoming works  \cite{Novo2026,RobbioInPrep} establishing a method for  multivariate trace estimation, with applications to computing GI.
 
\bibliography{ref_zotero}

\newpage
\onecolumngrid
\appendix 
\section*{Supplementary Material: Exponential improvement in benchmarking multiphoton interference}
\section{Preliminaries}

We write an $n$-photon Fock state, in $m$ modes as~\cite{aaronsonComputational2011, tichyZeroTransmission2010, tichySampling2015}:
\begin{equation}
    |\phi (\vec{r})\rangle = \prod\limits_{i=1}^m \frac{1}{\sqrt{r_i!}}  (\hat{a}_{i}^\dagger)^{r_i}|0\rangle \enspace ,
\end{equation}
where $|0\rangle$ is the vacuum state on $m$ modes and $a_i^\dagger (a_i)$ are the creation (annihilation) operators that follow the bosonic commutation relations:
\begin{equation}
    [\hat{a}_i^, \hat{a}_j^\dagger] = \delta_{i,j} \ , [\hat{a}_i, \hat{a}_j] = [\hat{a}_i^\dagger, \hat{a}_j^\dagger] = 0 \ . 
\end{equation}
A Fock state can always be described by it's vector $\vec{r} = (r_1, ..., r_m)$ where $r_i$ is the number of photons in mode $i$ and $i=1, ..., m$. Throughout our work we denote input state vectors as $\vec{r}$ and output state vectors as $\vec{s}$.

A unitary matrix representing the optical interferometer acts on creation and annihilation operators~\cite{camposQuantummechanical1989}, mapping $\hat{a}_i^\dagger$ to $\hat{b}_j^\dagger$ via:
\begin{equation}\label{eq: unitary_transform}
    \hat{b}_j^\dagger = \sum\limits_{k=1}^m U_{j,k} \hat{a}_k^\dagger \enspace .
\end{equation}

\begin{definition}[t-periodic Fock state]\label{defn: t-periodicFock}
    a Fock state vector of length $m$, with a repeating pattern of length $t$ of configurations of photons in modes, where the photons are fully indistinguishable
\begin{equation}\label{eq: fock_periodic_state}
    \vec{r}_t := (r_1, ..., r_t, r_1, ..., r_t, ..., r_1, ..., r_t) \enspace ,
\end{equation}
where $r_i \in (0, ..., n)$ and we label the number of photons per period $t$ as $n_t$.
\end{definition}
 Examples of $t$-periodic states are: $\vec{r}_{3} = (1, 0, 0, 1, 0, 0)$ or $\vec{r}_{3} = (2, 1, 0, 2, 1, 0)$.
 \\

\begin{definition}[Mode-assignment vector] \label{defn: mode_assignment}for any state vector $\vec{s}$
\begin{equation}
    d(\vec{s}) := (d_1 (\vec{s}), ..., d_n (\vec{s})), d_i(\vec{s}) \in (1, ..., m) \ \text{with} \ d_i (\vec{s}) \leq d_{i+1} (\vec{s})\enspace ,
\end{equation}
labelling the modes of each photon.
\end{definition}
For example, let $\vec{s} = (2, 1, 0, 1)$ then $ \ d(\vec{s}) = (1, 1, 2, 4)$.
\\
\begin{definition}[Q-value]\label{defn: q_value}
for any state vector $\vec{s}$
\begin{equation}\label{Q_value}
    Q(\vec{s}) = \mod\left(\sum\limits_i d_i(\vec{s}), m\right) \enspace ,
\end{equation}
where $\sum\limits_i d_i(\vec{s})$ is the sum of the elements of vector $d(\vec{s})$. 
\end{definition}
Using the above value of $d(\vec{s}) = (1, 1, 2, 4)$, the Q-value is $Q(\vec{s}) = (1 + 1 + 2 + 4) \mod 4 = 0$.

\subsection{Partial distinguishability}
For $n$ photons, we define a partially distinguishable partition state in density matrix form, using the approach of refs.~\cite{brodWitnessing2019, giordaniExperimental2020}:
\begin{equation}\label{eq: partially_distinguishable}
    \rho = c_1\rho^\parallel + \sum_j c_j \rho_j^\perp \enspace ,
\end{equation}
where $\rho^\parallel$ is a pure state of $n$ fully indistinguishable photons and $\rho_j^\perp$, $j < 1$, are pure states consisting of subsets of fully indistinguishable photons, where \emph{at least} two photons are distinguishable (orthogonal) to each other - this includes the fully distinguishable state, in which case the subsets are of size $1$. Note that the $\rho_j^\perp$ represent all possible partition states which we define below.

The value of $c_1$ is the probability of the mixed state being fully indistinguishable and is what we refer to as \emph{genuine n-photon indistinguishability}. \\
\begin{definition}[Partition state]\label{defn: partition_state}
A pure state of $n$ photons in $m$ modes that can be partitioned into $K$ registers of indistinguishable photons that are completely orthogonal to photons in other registers. It can be written in state form as:
    \begin{equation}
        |\phi(\vec{r})\rangle := \left(\prod\limits_{j=1}^K \prod\limits_{i=1}^m \frac{1}{\sqrt{r_i^{(j)}!}} (\hat{a}_{i, |\psi_j\rangle}^\dagger)^{(r_i^{(j)})}\right)
        |0\rangle \enspace ,
    \end{equation} 
where the superscript $^{(j)}$ denotes the registers $j=1, ..., K$ where all photons in register $j$ are indistinguishable from each other, therefore sharing internal degrees of freedom represented by $|\psi_j\rangle$.

The vector form of the partition state can be written as: $\vec{r} = \sum\limits_{j=1}^K \vec{r}^{\ (j)}$, which we visualise below:
\begin{equation}\label{eq: partition_state}
\begin{split}
    \vec{r} = (&r_1, \ \   r_2 \ ,..., \  r_m) \\
     & \ \ \ \ \rotatebox[origin=c]{90}{$=$} \\ 
    \vec{r}^{\ (1)} = (&r_1^{(1)}, r_2^{(1)}, ..., r_m^{(1)})\\
    & \ \ \ \ \rotatebox[origin=c]{90}{$+$} \\
    &\ \vdots \ \ \ \ \ \ \vdots \ \ \ \ \ \ \ \ \ \vdots \\
    & \ \ \ \ \rotatebox[origin=c]{90}{$+$} \\ 
    \vec{r}^{\ (K)}= (&r_1^{(K)}, r_2^{(K)}, ..., r_m^{(K)}) \enspace ,
\end{split}
\end{equation}
where $r_i \in (0, ..., n)$ and $\sum_{j=1}^K r_i^{(j)} = r_i$, the number of photons in mode $i$.
We often refer to the states $|\phi(\vec{r}^{\ (j)})\rangle$ (or vector form $\vec{r}^{\ (j)}$) as register-states.
\end{definition}

We further define a special case of a partition state, which is a simpler model of partial distinguishability, known as the \emph{orthogonal bad bit (OBB)} model. We define the OBB partition state in vector form only, for simplicity. 
\\

\begin{definition}[OBB partition state]\label{defn: obb_state}
a partition state of length $m$, $\vec{r} = \sum_{j=0}^{k} \vec{r}^{\ (j)}$ with one register ($j=0$) containing fully indistinguishable photons and $k$ registers, each containing exactly one photon that is fully distinguishable from all others.

Formally, a $k$-distinguishable OBB partition state can be written as:
\begin{equation}
    \vec{r} = \vec{r}^{\ (0)} + \sum_{j=1}^{k} \vec{r}^{\ (j)},
\end{equation}
where:
\begin{itemize}
    \item $\vec{r}^{\ (0)}$ is a Fock state corresponding to a register of photons that are fully indistinguishable among themselves,
    \item Each $\vec{r}^{\ (j)} = (r_1^{(j)}, ..., r_m^{(j)})$ (see above Eq.~\ref{eq: partition_state}), for $j = 1, \dots, k$, corresponds to a single register containing exactly one photon that is fully distinguishable from all other photons, i.e., $\sum_{i=1}^m r_i^{(j)} = 1$.
\end{itemize}

By construction, no subset of distinguishable photons can share internal degrees of freedom: each photon is either fully indistinguishable from those in $\vec{r}^{\ (0)}$, or fully distinguishable from all others.
\end{definition}

Finally, we define another periodic state, similar to Definition~\ref{defn: t-periodicFock}
\\

\begin{definition}[t-periodic partition state]\label{defn: t-periodic_partition}
a pure state vector of length $m$ with $n=m$ photons, and one photon per mode, which is a union of $t_i$-periodic Fock states $\vec{r}_{t_i}^{\ (i)}$ where the period $t = lcm(t_i)$. 
\end{definition} 
To best illustrate this, consider the following periodic Fock states, labelled by their registers $i=1, 2, 3$ such that each state contains indistinguishable photons, with periodicities $t_1, t_2, t_3$:
\begin{equation}
\begin{split}
    \vec{r}_{t_1}^{\ (1)} &= (1_1, 0, 1_1, 0, 1_1, 0, 1_1, 0) \\
    \vec{r}_{t_2}^{\ (2)} &= (0 , 0, 0, 1_2, 0, 0, 0, 1_2) \\
   \vec{r}_{t_3}^{\ (3)} &= (0, 1_3, 0, 0, 0, 1_3, 0, 0) \enspace .
\end{split}
\end{equation}
 Since we have one photon per mode by design, for every $\vec{r}_{t_i}^{\ (i)} = (r_{1}^{(i)},.., r_{m}^{(i)})$ we denote $r_{j}^{(i)} = 1_i$ for each register $i$ and mode $j$.
 
Then, the union of the above states is a $t$-periodic partition state of the form:
\begin{equation}\label{eq: partition_periodic_state}
\vec{r}_{t} = (1_1, 1_3, 1_1, 1_2 , 1_1, 1_3, 1_1, 1_2)  \enspace ,
\end{equation}
with period $t=\mathrm{lcm}(t_1, t_2, t_3) = \mathrm{lcm}(2, 4, 4) =4$. 

\subsection{Transition probability}

Let $\hat{U}$ be the unitary transformation induced by the linear optical interferometer $U \in \mathbb{C}^{m \times m}$. Defining our input state as $|\phi(\vec{r})\rangle$, the probability of transition to the output state $|\phi(\vec{s})\rangle$ is simply:
\begin{equation}\label{eq: transition_prob}
    P_{\vec{r} \rightarrow \vec{s}} = |\langle \phi(\vec{s})|\hat{U}|\phi(\vec{r})\rangle|^2
\end{equation}

Let $d(\vec{r}), d(\vec{s}) \in [n]^m$ be the mode-assignment vectors for the input and output, respectively, see Definition~\ref{defn: mode_assignment}. Then we define the effective scattering matrix 
describing the transition from $\vec{r}$ to $\vec{s}$ as the sub-matrix of $U$:
\begin{equation}
\label{eq: M_matrix}
    M_{d(\vec{r}), d(\vec{s})} \in \mathbb{C}^{n\times n} 
    \enspace ,
\end{equation}
formed by first taking $\vec{r}_i$ copies of the $i_{th}$ column of $U$ and then taking $\vec{s}_j$ copies of the $j_{th}$ rows of this new matrix~\cite{aaronsonComputational2011}. 

For coincidence input and output states, where maximally one photon appears per mode, Tichy et al.~\cite{tichySampling2015} derive the general transition probability to be:
\begin{equation}
\label{eq: transition_probability}
    P_{\vec{r} \rightarrow \vec{s}} := \sum\limits_{\sigma, \rho \in \mathsf{S}_n} \prod\limits_{j=1}^n M_{\sigma(j), j} M_{\rho(j), j}^* \mathrm{S}_{\rho(j), \sigma(j)} \enspace , 
\end{equation}
where $M_{i,j}^*$ denotes the complex-conjugate of the matrix $M$ with columns (rows) permuted according to $i (j)$ and $\mathsf{S}_n$ is the permutation group of $n$ photons. $\mathrm{S}_{l,k}$ are the elements of the \emph{distinguishability matrix}, which encodes how distinguishable pairs of photons are and is defined as

\begin{equation}
    \mathrm{S}_{k,l} = \langle r_k^{(i)}| r_l^{(j)}\rangle = \delta_{i,j}\enspace,
\end{equation} 
with elements equal to $1$ when the photons ($r_k,r_l$) are indistinguishable, belonging to the same register $(i,j)$, and zero otherwise.

Note that, if the input state is $t$-periodic (by either Definition~\ref{defn: t-periodicFock} or ~\ref{defn: t-periodic_partition}) then the following holds:
\begin{equation}\label{eq: S_matrix_periodic}
  \prod\limits_l\mathrm{S}_{l,l+t} = 1 \enspace .
\end{equation}

We also note an equivalent way to express Eq.~\ref{eq: transition_probability} that we use for some of our later results:

\begin{equation}\label{eq: probability_perm}
    P_{\vec{r} \rightarrow \vec{s}} := \sum\limits_{\rho \in \mathsf{S}_n} \left[ \prod\limits_{j=1}^n \mathrm{S}_{j, \rho(j)}\right] \mathsf{Per} \left(M \odot M_{\rho, \mathbb{I}}^*\right)  \enspace ,
\end{equation}
where $\odot$ denotes the element-wise Hadamard-product and $\mathsf{Per}(.)$ denotes the permanent matrix function, defined as:

\begin{equation}
    \mathsf{Per}(A) = \sum_{\sigma \in \mathsf{S}_n} \prod_i A_{i , \sigma(i)}
\end{equation}

For potentially multiply occupied input and output states, when the input is fully indistinguishable, the matrix elements $\mathrm{S}_{l,k} = 1$ and we have to account for the occupancy with a normalization factor $\frac{1}{\prod_i r_i! s_i!} $. The probability becomes~\cite{aaronsonComputational2011, tichySampling2015}:
\begin{equation}\label{eq: Ind_prob}
\begin{split}
    P_{\text{ind}} &= \frac{1}{\prod_i r_i! s_i!} \left(\sum\limits_{\sigma \in \mathsf{S}_n} \prod\limits_{j=1}^n M_{\sigma(j), j}\right)\left(\sum\limits_{\rho \in \mathsf{S}_n} \prod\limits_{j=1}^n M_{\rho(j), j}^*\right)
    \\ &= \frac{1}{\prod_i r_i! s_i!} \left|\sum\limits_{\sigma \in \mathsf{S}_n} \prod\limits_{j=1}^n M_{\sigma(j), j}\right|^2 \\
    &=\frac{1}{\prod_i r_i! s_i!} |\mathsf{Per}(M)|^2 \enspace .
\end{split}
\end{equation}
Using the same techniques, when the input is fully distinguishable, $\mathrm{S}_{l,k} = \delta_{l,k}$ and the probability becomes:
\begin{equation}\label{eq: dist_prob}
\begin{split}
    P_{\text{dist}} &= \sum\limits_{\sigma \in \mathsf{S}_n} \prod\limits_{j=1}^n \left(M_{\sigma(j), j} M_{\sigma(j), j}^* \right)\\ 
    &= \mathsf{Per}(|M^2|) \enspace .
\end{split}
\end{equation}
We will see later how to derive the transition probability for partition states, in Sec.~\ref{sec: transition_prob_derivation}.

\subsection{Zero-transmission laws}

Generalising the Hong-Ou Mandel effect for $n$ photons, Tichy et al.~\cite{tichyManyparticle2012, tichyZeroTransmission2010} determined general suppression laws for when a fully indistinguishable input state is passed through a multi-port balanced beamsplitter, the operator of which takes the form of the Quantum Fourier Transform (QFT). The corresponding $m \times m$ unitary matrix associated with this transformation can be defined as follows:
\begin{equation}\label{eq: QFt_{x(n)}ef}
    QFT_m = \frac{1}{\sqrt{m}}\begin{bmatrix}
        1 & 1 & \cdots & 1 \\
        1 & \omega^1 & \cdots & \omega^{m-1}\\
        1 & \omega^2 & \cdots & \omega^{2(m-1)} \\
        \vdots & \vdots & \ddots & \vdots \\
        1 & \omega^{m-1} & \cdots & \omega^{(m-1)(m-1)}
    \end{bmatrix} \enspace ,
\end{equation}

where $\omega=e^{2 \pi i/m}$. 

These general suppression rules are known as the zero-transmission laws (ZTL) and defined as follows.
\begin{itemize}
    \item [] \pZTL{} (\emph{periodic ZTL}): for any $t$-periodic input state $|r_t\rangle$ (see Definitions.~\ref{defn: t-periodicFock} and ~\ref{defn: t-periodic_partition}):
\begin{equation}\label{eq: pZTL}
    Q(\vec{s}) \neq 0 \ \text{mod} \ \left(\frac{m}{t}\right) \implies \langle s| \hat{QFT}|r_t\rangle = 0 \enspace , 
\end{equation}
where $\hat{QFT}$ represents the action of $QFT_m$ interferometer in Fock space \cite{aaronsonComputational2011}. 
This means events characterised by $\vec{s}$ are strictly suppressed. Then, by taking $t=1$, this reduces to:
    \item []\ZTL{}: for an $n=m$ fully indistinguishable input state with one photon per mode:
\begin{equation}\label{eq: ZTL}
    Q(\vec{s}) \neq 0  \implies \langle s|\hat{QFT}| r\rangle = 0 \enspace . 
\end{equation}
\end{itemize}

\subsection{Shift invariance property of the QFT for linear optics}\label{sec: shift_invariance}

The \textit{shift invariance} property of the QFT is well known in the qubit picture \cite{jozsaQuantum1998, chuangQuantum2010}. Moreover, it serves as a powerful tool for exploiting symmetries and will be used throughout this work. It is not immediately clear, however, whether this property also holds for a linear-optical Fourier interferometer. We therefore present a formal derivation to ensure its validity in the linear-optical setting.

Given an $n$-photon, $m$-mode input state $\vec{r}$, subjected to the $QFT_m$, the transition probability of ending up in state $\vec{s}$ 
for fully indistinguishable photons (Eq.\ref{eq: Ind_prob}) with $M = QFT_{d(\vec{r}),d(\vec{s})}$ is:

\begin{equation}
\label{eq:prob_perm}
    P_{\vec{r}\rightarrow \vec{s}}=  \frac{1}{\prod_j r_{j}!·s_{j}!} \left|\mathsf{Per}(QFT_{d(\vec{r}),d(\vec{s})})\right|^2=  \frac{1}{\prod_{j}^m r_{j}!·s_{j}!} \left|\sum_{\sigma\in S_n }\prod_{i=1}^n \frac{1}{\sqrt{m}}w^{d_{i}(\vec{s})·d_{\sigma(i)}(\vec{r})}\right|^2 \enspace ,
\end{equation}
where we have used the definition of the linear optical QFT in Eq.\ref{eq: QFT_def} to substitute for the elements of $QFT_{d(\vec{r}),d(\vec{s})}$.

Next, let $s^{(k)}_{i} = s_{i+k \text{ mod m}}$ be a shifted output of $\vec{s}$ that shifts each photon by $k$ modes, modulo $m$. Then, according to Definition~\ref{defn: mode_assignment} of the mode assignment vector, the elements of $d(\vec{s}^{\ (k)})$ can be expressed in terms of elements of $d(\vec{s})$ as:

\begin{equation}
    d_{\sigma(i)}(\vec{s}^{\ (k)}) = d_i(\vec{s})+k \text{ mod m} \enspace .
\end{equation}

We can now compute the transition probability from $\vec{r}$ to $\vec{s}^{\ (k)}$ as:

\begin{equation}\label{eq: prob_shifted}
    P_{\vec{r}\rightarrow \vec{s}^{(k)}}=  \frac{1}{\prod_j r_{j}!·s^{(k)}_{j}!} \left|\sum_{\sigma\in S_n }\prod_{i=1}^n \frac{1}{\sqrt{m}}\omega^{d_{i}(\vec{s}^{\ (k)})·d_{\sigma(i)}(\vec{r})}\right|^2 =  \frac{1}{\prod_j r_{j}!·s_{j}!} \left|\sum_{\sigma\in S_n }\frac{1}{\sqrt{m}}\omega^{\sum_{i=1}^n (d_{i}(\vec{s})+k)·d_{\sigma(i)}(\vec{r})}\right|^2 \enspace ,
\end{equation}

where we have used the fact that $\prod_j s_j! = \prod_j s_{j+k \text{ mod m}}!$. Using the Q-value, Definition~\ref{defn: q_value}, we can expand the sum in the exponent as:

\begin{equation}
    \sum_{i=1}^n (d_{i}(\vec{s})+k)·d_{\sigma(i)}(\vec{r}) = \sum_{i=1}^n d_{i}(\vec{s})·d_{\sigma(i)}(\vec{r})+\sum_{i=1}^nk·d_{\sigma(i)}(\vec{r}) =  \sum_{i=1}^n d_{i}(\vec{s})·d_{\sigma(i)}(\vec{r}) + k\cdot Q(\vec{r}) \enspace ,
\end{equation}

and then we substitute this back in to Eq.\ref{eq: prob_shifted}:

\begin{equation}\label{eq: prob_shifted_final}
    P_{\vec{r}\rightarrow \vec{s}^{(k)}}=  \frac{1}{\prod_j r_{j}!·s_{j}!} |\sum_{\sigma\in S_n }\frac{1}{\sqrt{m}}\omega^{\sum_{i=1}^n d_{i}(\vec{s})·d_{\sigma(i)}(\vec{r}) + k\cdot Q(\vec{r})}|^2 = \frac{1}{\prod_j r_{j}!·s_{j}!} |\omega^{k\cdot Q(\vec{r})}\sum_{\sigma\in S_n }\prod_{i=1}^n \frac{1}{\sqrt{m}}\omega^{d_{i}(\vec{s})·d_{\sigma(i)} (\vec{r})}|^2
\end{equation}

In  other words, if we shift the elements of $\vec{s}$ by $k$ then we add a global phase of $\omega^{k\cdot Q(\vec{r})}$ to the permanent, which does not impact the transition probability, as $|\omega^{k\cdot Q(\vec{r})}|^2=1$ and therefore:

\begin{equation}
     P_{\vec{r}\rightarrow \vec{s}^{\ (k)}} = P_{\vec{r}\rightarrow \vec{s}}
\end{equation}

\section{Main theoretical results}\label{sec: main_theory_results}

\subsection{Deriving the transition probability for partition states}\label{sec: transition_prob_derivation}
Our results rely on the fact that the transition probability (Eq.~\ref{eq: transition_probability}) for partition input states can be written as a sum of products of probabilities over the distinguishable subsets of photons. In this section, we prove that this is indeed the case. 

We saw earlier in Definition~\ref{defn: partition_state} that we can write our input partition state as: 
\begin{equation}\label{eq: input_state}
    |\phi(\vec{r})\rangle := \left( \prod\limits_{j=1}^K \prod\limits_{i=1}^m \frac{1}{\sqrt{r_i^{(j)}!}} (\hat{a}_{i, |\psi_j\rangle}^\dagger)^{(r_i^{(j)})}\right)|0\rangle \enspace .
\end{equation}
We can write any fixed output state $\vec{s}$ containing subsets of indistinguishable photons in fixed output modes, in the same form as our partition state (if the positions of the photons are known) in Definition~\ref{defn: partition_state} as $\vec{s} = \sum\limits_{j=1}^K \vec{s}^{\ (j)}$ where each $\vec{s}^{\ (j)} = (s_1^{(j)}, ..., s_m^{(j)})$. Using this formalism we define the following set for each output vector $\vec{s}$ that contains all the possible configurations of $\vec{s}^{\ (j)}$:
\begin{equation}
    \mathcal{S}(\vec{s}) := \left\{\{\vec{s}^{\ (j)}\} | \sum_j \vec{s}^{\ (j)} = \vec{s}\right\} \enspace ,
\end{equation}
where, of course and by definition of a lossless system, the number of photons in each output register is the same as that in each input register, i.e. $\sum\limits_i s_i^{(j)} = \sum\limits_i r_i^{(j)} = n_j \ \forall j$. We now present our result for the transition probability of partition states.

\begin{proposition}\label{prop: probability}
    For partition input states where we have $K$ subsets of fully indistinguishable photons, the transition probability is:
    \begin{equation}\label{eq: transition_probability}
    P_{\vec{r} \to \vec{s}}= \sum\limits_{\{\vec{s}^{(j)}\} \in \mathcal{S}(\vec{s})}  \left(\prod_{j= 1}^K N_j|\mathsf{Per}(M_{d(\vec{r}^{(j)}), d(\vec{s}^{(j)})})|^2\right) \enspace ,
    \end{equation}
where the normalization factor for each register is $N_j = \frac{1}{\prod_i r_i^{(j)} ! s_i^{(j)}!}$.
\end{proposition}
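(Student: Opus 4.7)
The plan is to derive Eq.~\ref{eq: transition_probability} by a direct bosonic calculation that exploits two structural facts about partition states: the registers carry mutually orthogonal internal states, and detectors in mode $k$ do not resolve the internal degree of freedom. Accordingly, I would begin by writing the mode-number measurement outcome $\vec{s}$ as a projector onto the subspace of fully internal-state-resolved Fock states compatible with $\vec{s}$. Concretely, for each element $\{\vec{s}^{(j)}\}\in\mathcal{S}(\vec{s})$ one has a resolved state
\begin{equation}
|\phi(\{\vec{s}^{(j)}\})\rangle = \Bigl(\prod_{j=1}^{K}\prod_{i=1}^{m}\tfrac{1}{\sqrt{s_i^{(j)}!}}(\hat b_{i,|\psi_j\rangle}^{\dagger})^{s_i^{(j)}}\Bigr)|0\rangle,
\end{equation}
and the orthogonality $\langle\psi_i|\psi_j\rangle=\delta_{ij}$ together with the bosonic commutation relations ensures that these resolved states are mutually orthonormal. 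The POVM element corresponding to the observed pattern $\vec{s}$ is therefore $\sum_{\{\vec{s}^{(j)}\}\in\mathcal{S}(\vec{s})} |\phi(\{\vec{s}^{(j)}\})\rangle\langle\phi(\{\vec{s}^{(j)}\})|$, so that $P_{\vec r\to\vec s}=\sum_{\{\vec{s}^{(j)}\}\in\mathcal{S}(\vec{s})} |\langle\phi(\{\vec{s}^{(j)}\})|\hat U|\phi(\vec r)\rangle|^{2}$.

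The second step is to factorize the amplitude $\mathcal{A}_{\{\vec{s}^{(j)}\}}:=\langle\phi(\{\vec{s}^{(j)}\})|\hat U|\phi(\vec r)\rangle$ register by register. Under the interferometer the creation operators transform as in Eq.~\ref{eq: unitary_transform}, and crucially this transformation is independent of the internal label, so $\hat U\hat a^{\dagger}_{i,|\psi_j\rangle}\hat U^{\dagger}=\sum_k U_{k,i}\hat b^{\dagger}_{k,|\psi_j\rangle}$. Substituting into the input state in Eq.~\ref{eq: input_state} and expanding, the overlap with $|\phi(\{\vec{s}^{(j)}\})\rangle$ contains a product over registers of operator strings labeled by $|\psi_j\rangle$. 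Because $\langle\psi_i|\psi_j\rangle=\delta_{ij}$ and registers are populated by orthogonal internal states, only Wick contractions that stay within each register survive; all cross-register contractions give zero. This yields the clean factorization
\begin{equation}
\mathcal{A}_{\{\vec{s}^{(j)}\}} = \prod_{j=1}^{K}\langle\phi(\vec{s}^{(j)})|\hat U|\phi(\vec{r}^{(j)})\rangle,
\end{equation}
where each factor is a standard single-register overlap of fully indistinguishable photons.

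The third step is to evaluate each single-register factor using the known fully indistinguishable result, Eq.~\ref{eq: Ind_prob}. This gives $|\langle\phi(\vec{s}^{(j)})|\hat U|\phi(\vec{r}^{(j)})\rangle|^{2}=N_j|\mathsf{Per}(M_{d(\vec{r}^{(j)}),d(\vec{s}^{(j)})})|^{2}$ with $N_j=1/\prod_i r_i^{(j)}!s_i^{(j)}!$. Taking the modulus squared of $\mathcal{A}_{\{\vec{s}^{(j)}\}}$ turns the product over registers into a product of real non-negative numbers, and summing over $\mathcal{S}(\vec{s})$ produces exactly Eq.~\ref{eq: transition_probability}.

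The main obstacle I anticipate is the careful accounting of multiple-occupancy combinatorics: one must check that the $\sqrt{r_i^{(j)}!}$ and $\sqrt{s_i^{(j)}!}$ normalization factors appearing in the resolved Fock states combine with the expansion of the shifted creation operators to reproduce precisely $N_j$, rather than some state-dependent overcounting coming from the sum over $\mathcal{S}(\vec{s})$. The cleanest way to handle this is to track the Wick contractions within a single register in the same way as the standard derivation of Eq.~\ref{eq: Ind_prob}, and then verify that the across-registers sum $\sum_{\mathcal{S}(\vec{s})}$ correctly partitions the indistinguishable-within-register contractions without double-counting, which follows from the fact that distinct resolved states are orthogonal and exhaust the mode-pattern subspace $\vec s$.
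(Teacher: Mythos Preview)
Your proposal is correct and follows essentially the same route as the paper's own proof: decompose the mode-pattern outcome $\vec{s}$ into internally resolved configurations $\{\vec{s}^{(j)}\}$, factorize the amplitude across registers using the orthogonality of the internal states $|\psi_j\rangle$, apply the indistinguishable-photon permanent formula (Eq.~\ref{eq: Ind_prob}) within each register, and sum over $\mathcal{S}(\vec{s})$. The only difference is cosmetic: you are more explicit about the POVM construction and the vanishing of cross-register Wick contractions, whereas the paper simply asserts the factorization and the need to sum over resolved configurations without spelling out the orthogonality argument.
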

\begin{proof}
    We know that, since our system is lossless, the output state vector will also be a partition state over the registers, $\vec{s} = \sum\limits_{j=1}^K \vec{s}^{\ (j)}$.
    
    If the exact configuration of photons in the output is known, i.e. the specific set 
    $\{\vec{s}^{(j)}\}$, then using Eq.~\ref{eq: input_state}, we can write the transition probability from $\vec{r}$ to $\vec{s}_{\{\vec{s}^{(j)}\}}$ (Eq.~\ref{eq: transition_prob}) as:
    \begin{equation}
        P_{\vec{r} \rightarrow \vec{s}_{\{\vec{s}^{(j)}\}}} = \left|\langle 0| \prod\limits_{j=1}^K \prod\limits_{k=1}^m \frac{1}{\sqrt{s_k^{(j)}!}} (\hat{a}_{k, |\psi_j\rangle})^{s_k^{(j)}} U \left(\prod\limits_{j=1}^K \prod\limits_{i=1}^m\frac{1}{\sqrt{r_i^{(j)}!}}(\hat{a}_{i, |\psi_j\rangle}^\dagger)^{r_i^{(j)}}|0\rangle\right)\right|^2  \enspace ,
    \end{equation}
where for ease of notation, we write the unitary as $U(\cdot)$ which acts on the creation and annihilation operators as specified in Eq.~\ref{eq: unitary_transform}. The probability then simplifies to:
\begin{equation}
    \begin{split}
        P_{\vec{r} \rightarrow \vec{s}_{\{\vec{s}^{(j)}\}}} &= \prod\limits_{j=1}^K  \frac{1}{\prod\limits_{i=1}^m s_i^{(j)}!r_i^{(j)}!}\left|\langle 0|\prod\limits_{l=1}^{n_j}\left(\hat{a}_{d_l\left(\vec{s}^{(j)}\right)}\right)U \left(\hat{a}_{d_l(\vec{r}^{(j)})}^\dagger\right)|0\rangle\right|^2 \\
        &= \prod\limits_{j=1}^K  \frac{1}{\prod\limits_{i=1}^m s_i^{(j)}!r_i^{(j)}!}\left|\langle \vec{s}^{\ (j)}| U|\vec{r}^{\ (j)}\rangle\right|^2
    \end{split}
\end{equation}

From Eq.~\ref{eq: transition_prob} and Eq.~\ref{eq: Ind_prob} it should be clear to see that the term after the product $\prod\limits_{j=1}^K$, i.e. the normalization with the modulus-squared, is exactly equal to the transition probability from state $\vec{r}^{(j)}$ to state $\vec{s}^{(j)}$ that contain the same fully indistinguishable photons, which according to Eq.~\ref{eq: Ind_prob} is:
\begin{equation}
       \frac{1}{\prod\limits_{i=1}^m s_i^{(j)}!r_i^{(j)}!}\left|\langle \vec{s}^{\ (j)}| U|\vec{r}^{\ (j)}\rangle\right|^2 = N_j |\mathsf{Per}(M_{d(\vec{r}^j), d(\vec{s}^j)})|^2 \enspace ,
\end{equation}
so that our probability to a known output configuration state $\vec{s}_{\{\vec{s}^{(j)}\}}$ is just:
\begin{equation}
    P_{\vec{r} \rightarrow \vec{s}_{\{\vec{s}^{(j)}\}}} = \prod\limits_{j=1}^K N_j |\mathsf{Per}(M_{d(\vec{r}^j), d(\vec{s}^j)})|^2 \enspace ,
\end{equation}
then by noting that we must sum over all possible configurations $ \{\vec{s}^{(j)}\} \in \mathcal{S}(\vec{s})$ to get the transition probability of fixed input $\vec{r}$ to output $\vec{s}$ we complete the proof: 
\begin{equation}
    P_{\vec{r} \to \vec{s}}= \sum\limits_{\{\vec{s}^{(j)}\} \in \mathcal{S}(\vec{s})}  \left(\prod_{j= 1}^K N_j|\mathsf{Per}(M_{d(\vec{r}^{(j)}), d(\vec{s}^{(j)})})|^2\right) \enspace .
\end{equation}
\end{proof}
The above derivation results straight-forwardly from Definition~\ref{defn: partition_state} of a partition state, and then using the formalisms of Aaronson and Tichy~\cite{aaronsonComputational2011, tichySampling2015}.

To convince ourselves that this is indeed the correct transition probability, we can also use the above to derive the coincidence transition probability of Tichy, Eq.~\ref{eq: transition_prob}. If we have only single photons in each mode, in the input and output then Eq.~\ref{eq: transition_probability} loses the normalization factor, and using Eq.~\ref{eq: Ind_prob} we have: 
\begin{equation}
\begin{split}\label{eq: deriving_tichy}
    P_{\vec{r} \rightarrow \vec{s}} &= \sum\limits_{\{\vec{s}^{(j)}\} \in \mathcal{S}(\vec{s})} \left(\prod\limits_{j=1}^K |\mathsf{Per}(M_{d(\vec{r}^{(j)}), d(\vec{s}^{(j)})})|^2\right) \\ 
    &=\sum\limits_{\{\vec{s}^{(j)}\} \in \mathcal{S}(\vec{s})} \left(\prod\limits_{j=1}^K \left|\sum_{\sigma_j \in \mathsf{S}_{n_j}} \prod\limits_{i = 1}^{n_j} M_{\sigma_j(i), i}\right|^2\right) \\ 
    &= \sum\limits_{\{\vec{s}^{(j)}\} \in \mathcal{S}(\vec{s})}\sum_{\sigma_j,\rho_j \in \mathsf{S}_{n_j}} \prod\limits_{j=1}^K  \prod\limits_{i = 1}^{n_j} M_{\sigma_j(i), i} M_{\rho_j(i), i}^* \enspace .
\end{split}
\end{equation}
In the above equation the product of the  distinguishability matrix elements are equal to one: $\prod\limits_{i=1}^{n_j} \mathrm{S}_{\rho_j(i), \sigma_j(i)} = 1$ for each register of indistinguishable photons. However, we could easily add this back into our equation and then, since any orthogonal photons result in a zero-element of the distinguishability matrix, we can see that the product of distinguishability matrix elements in Tichy's Eq.~\ref{eq: transition_prob} can be split into the following product:
\begin{equation}
    \prod\limits_{k=1}^n \mathrm{S}_{\rho(k), \sigma(k)}  = \prod\limits_{j=1}^K \prod\limits_{i=1}^{n_j} \mathrm{S}_{\rho_j(i), \sigma_j(i)} \enspace ,
\end{equation}
and applying the same treatment to our middle term for the effective scattering matrices, we can replace it with:
\begin{equation}
    \prod\limits_{j=1}^K\prod\limits_{i=1}^{n_j} M_{\sigma_j(i), i} M_{\rho_j(i), i}^*= \prod\limits_{k=1}^n M_{\sigma(k), k} M_{\rho(k), k}^*
    \enspace .
\end{equation}
Then we can focus on the sum over the possible sets of distinguishable registers and over the permutations of those specific registers. The sum in Eq.~\ref{eq: transition_prob} is over all possible permutations of $n$ photons, the permutation group $\mathsf{S}_n$, where the other terms in the sum account for the distinguishabilities of the photons. In our Eq.~\ref{eq: deriving_tichy} we first sum over the permutation group $\mathsf{S}_{n_j}$ of $n_j$ photons, the number of photons in each $j_{th}$ register and then sum over all the sets of register-states $\{\vec{s}^{(j)}\}$ that give our fixed output $\vec{s}$. This is equivalent to summing over all possible permutations of the total $n$ photons as it will cover all possible paths for the single photons to end up in each output mode. Therefore, we have that:
\begin{equation}
\begin{split}
    \sum\limits_{\sigma, \rho \in \mathsf{S}_n}(.) &= \sum\limits_{\{\vec{s}^{(j)}\} \in \mathcal{S}(\vec{s})}\sum_{\sigma_j,\rho_j \in \mathsf{S}_{n_j}} (.),\\ then 
    \sum\limits_{\sigma, \rho \in \mathsf{S}_n} \prod\limits_{j=1}^n M_{\sigma(j), j} M_{\rho(j), j}^* \mathrm{S}_{\rho(j), \sigma(j)} &=\sum\limits_{\{\vec{s}^{(j)}\} \in \mathcal{S}(\vec{s})}\sum_{\sigma_j,\rho_j \in \mathsf{S}_{n_j}} \prod\limits_{j=1}^K  \prod\limits_{i = 1}^{n_j} M_{\sigma_j(i), i} M_{\rho_j(i), i}^* \mathrm{S}_{\rho_j(i), \sigma_j(i)} \enspace .
\end{split}
\end{equation}

Annoni and Wein~\cite{annoniIncoherent2025} show a similar transition probability for partition states, with separability based on permutation symmetries.

\subsubsection{Necessary probabilities}\label{sec: probabilities}
The probability that we are interested in is that of the Q-value (Definition~\ref{defn: q_value}) which is a function of a specific output $Q(\vec{s})$. We can further express the Q-value as a function of the different register-states. First, we define a quantity that is the sum of the elements of the mode-assignment vector (Definition~\ref{defn: mode_assignment}) for each register-state $\vec{s}^{(j)}$ where $j=1, ..., K$:
\begin{equation}
    D_j := \sum\limits_{l=1}^{n} d_l(\vec{s}^{\ (j)}) \enspace ,
\end{equation}
then we can write the Q-value for each register as:
\begin{equation}
    Q_j := Q(\vec{s}^{\ (j)}) = \text{mod} \left( D_j, m \right) \enspace .
\end{equation}
We then have a clear way of expressing the Q-value for a specific output in terms of the registers and their Q-values:
\begin{equation}
\begin{split}\label{eq: Q_sub_mods}
    Q(\vec{s}) &= \mod \left(\sum\limits_{j=1}^K D_j, m\right) \\
    &= \mod \left(\sum\limits_{j=1}^K (\text{mod} \ (D_j, m)), m \right)\\ 
    &= \mod \left(\sum_{j=1}^K Q_j, m\right)\enspace .
\end{split}
\end{equation}

Now, we can define the probability of obtaining a specific register-sum $Q_j$ value from all the possible output vectors associated with that register $\vec{s}^{\ (j)}$, for a fixed input register-state $\vec{r}^{\ (j)}$.
\begin{equation}\label{eq: Q_value_prob}
    P_{\vec{r}^{(j)}}(Q_j = x) = \sum\limits_{\vec{s}^{\ (j)} |Q_j = x} P_{\vec{r}^{(j)} \rightarrow \vec{s}^{(j)}} \enspace .
\end{equation}

Similarly, we can define the probability of obtaining a specific $Q$-value (from all possible output vectors) for a fixed input partition state $\vec{r}$ as:
\begin{equation}\label{eq: prob_Q}
    P_{\vec{r}}(Q=y) := \sum\limits_{\{\vec{s}\}|Q(\vec{s}) = y} P_{\vec{r} \rightarrow \vec{s}} \enspace ,
\end{equation}
where $\{\vec{s}\}|Q(\vec{s}) = y$ is the set of all output state-vectors $\vec{s}$ that result in $Q(\vec{s}) = y$.

Finally, and equivalently we can define the probability of obtaining a specific Q-value with respect to the registers, as:
\begin{equation}\label{eq: prob_Q_value}
    P_{\vec{r}}(Q=y) := \sum\limits_{\{Q_j\}|Q=y} \prod\limits_{j=1}^K P_{\vec{r}^{(j)}} (Q_j) \enspace ,
\end{equation}
where $\{Q_j\}|Q(\vec{s}) = y$ is the set of all $Q_j$ such that $Q(\vec{s}) = y$, by Eq.~\ref{eq: Q_sub_mods}.

\subsubsection{Q-marginals probability for partition states}
An important result that we use for the main results of Sec.~\ref{sec: partition_results} is what we refer to as the $Q$-marginals probability:
\begin{theorem}[Q-marginals probability]
\label{th:q_marginal}
    Given an input partition state $\rho$, and a QFT interferometer with $n=m$ photons and one photon per mode, the probability of obtaining a specific $Q$-value (from all possible output vectors) for a fixed input partition state $\vec{r}$ is:
    \begin{equation}
    \label{eq:q_marginal}
        P_{\vec{r}}(Q=y) = \frac{1}{n}\sum_{j=0}^{n-1}\omega^{yj} \left[ \prod\limits_{i=1}^n \mathrm{S}_{i, i+j}\right]
    \end{equation}
\end{theorem}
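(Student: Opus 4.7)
The plan is to turn the restricted sum defining $P(Q=y)$ into a discrete Fourier sum, recognise the resulting expectation as the matrix element of a diagonal phase operator that the QFT conjugates into a cyclic mode-shift, and finish with a one-line bosonic contraction on the pure partition input $|\phi(\vec{r})\rangle$.

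First, using the orthogonality relation $\frac{1}{n}\sum_{k=0}^{n-1}\omega^{k(Q(\vec{s})-y)}=\mathbbm{1}[Q(\vec{s})=y]$ on $\mathbb{Z}_n$, I would write
\begin{equation*}
P(Q=y) \;=\; \frac{1}{n}\sum_{k=0}^{n-1}\omega^{-ky}\,\mathbb{E}\!\left[\omega^{k Q(\vec{s})}\right],
\end{equation*}
where the expectation is over the output distribution of the QFT applied to $|\phi(\vec{r})\rangle$. Since $Q(\vec{s})=\sum_i i\,s_i\bmod n$ is linear in the mode occupations, the character $\omega^{k Q(\vec{s})}$ is the eigenvalue of the diagonal phase operator $D_k=\exp\!\big(i\tfrac{2\pi k}{n}\sum_i i\,\hat{n}_i\big)$ on the Fock state $|\vec{s}\rangle$, so $\mathbb{E}[\omega^{k Q}]=\langle\phi(\vec{r})|\,U_{\mathrm{QFT}}^\dagger\,D_k\,U_{\mathrm{QFT}}\,|\phi(\vec{r})\rangle$.

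The heart of the argument is to establish $U_{\mathrm{QFT}}^\dagger D_k U_{\mathrm{QFT}}=S_k$, where $S_k$ is the cyclic mode-shift $\hat{a}^\dagger_{i,\psi}\mapsto\hat{a}^\dagger_{i+k\bmod n,\psi}$: the Fock-space counterpart of the familiar fact that the discrete Fourier transform diagonalises shifts. I would verify it in the Heisenberg picture: from the linear-optical action $U_{\mathrm{QFT}}\,\hat{a}_i^\dagger\,U_{\mathrm{QFT}}^\dagger=\tfrac{1}{\sqrt{n}}\sum_j\omega^{ji}\hat{a}_j^\dagger$ (Eq.~\eqref{eq: unitary_transform}) together with $D_k\hat{a}_j^\dagger D_k^\dagger=\omega^{kj}\hat{a}_j^\dagger$, the conjugation collapses via the character orthogonality $\tfrac{1}{n}\sum_j\omega^{j(i+k-l)}=\delta_{l,\,i+k\bmod n}$ to produce $\hat{a}^\dagger_{i+k\bmod n}$. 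Because a mode-shift acts trivially on internal photon states, the identity extends directly to operators $\hat{a}^\dagger_{i,\psi}$ carrying an arbitrary internal label $\psi$, and hence to any partition input. With this in hand, writing $|\phi(\vec{r})\rangle=\prod_i\hat{a}^\dagger_{i,\psi_{t(i)}}|0\rangle$ where $t(i)$ labels the register of the photon in input mode $i$, one has $S_k|\phi(\vec{r})\rangle=\prod_i\hat{a}^\dagger_{i+k,\psi_{t(i)}}|0\rangle$; since every mode is singly occupied on both sides, bosonic contractions leave only the unique mode-by-mode matching and give
\begin{equation*}
\langle\phi(\vec{r})|S_k|\phi(\vec{r})\rangle \;=\; \prod_j\langle\psi_{t(j)}|\psi_{t(j-k)}\rangle \;=\; \prod_j\mathrm{S}_{j,\,j-k}.
\end{equation*}
Substituting back into the Fourier sum and relabelling $k\mapsto -j\bmod n$ (which sends $\omega^{-ky}\to\omega^{jy}$ and $\mathrm{S}_{j,j-k}\to\mathrm{S}_{j,j+k'}$) recovers exactly Eq.~\eqref{eq:q_marginal}.

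I expect the main obstacle to be establishing the Fock-space identity $U_{\mathrm{QFT}}^\dagger D_k U_{\mathrm{QFT}}=S_k$ carefully in the second-quantised setting, since the diagonal phase $D_k$ lives most naturally in Fock space while the QFT is specified by its single-particle action; one must check that the mode-shift extends consistently to multi-occupied outputs and to arbitrary internal photon states. A more combinatorial alternative, closer to the sketch in the main text, is to expand $P(Q=y)$ directly via Tichy's permanent formula (Eq.~\eqref{eq: transition_probability}), substitute the explicit QFT matrix entries, and show by character orthogonality over outputs that only ``shift permutations'' $\sigma:i\mapsto i+k$ on the input modes survive — reducing each surviving permanent to an $n$th root of unity and producing the same distinguishability-matrix product. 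The operator-level route above is however considerably more compact and automatically accommodates bunched output configurations.
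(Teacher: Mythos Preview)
Your proof is correct and takes a genuinely different route from the paper. The paper works entirely at the level of permanent expansions: it writes $P(Q=y)$ via Tichy's formula (Eq.~\eqref{eq: probability_perm}), substitutes the explicit QFT matrix entries, and then --- through a combinatorial reduction (Lemma~\ref{lem: c_tau}) that re-indexes the double sum over outputs $\vec{s}$ and permutations $\sigma$ as a sum over free tuples $(d_1,\ldots,d_{n-1})\in\{0,\ldots,n-1\}^{n-1}$ --- shows by direct summation of roots of unity that only shift permutations $\tau(i)=i+j\bmod n$ survive, each contributing $\omega^{yj}/n$. This is precisely the ``combinatorial alternative'' you sketch in your final paragraph. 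Your operator-theoretic route replaces all of that machinery with the single second-quantised identity $U_{\mathrm{QFT}}^\dagger D_k U_{\mathrm{QFT}}=S_k$, reducing the problem to the overlap of a one-photon-per-mode product state with its cyclic shift --- a one-line contraction. The gain is real: bunched outputs, the $1/\prod_j s_j!$ normalisations, and the symmetry-subgroup bookkeeping $\mathsf{S}_n/\mathcal{S}_{\vec{s}}$ that the paper handles explicitly are all absorbed into the operator identity, and the extension to arbitrary internal labels is transparent. The paper's approach, by contrast, stays within the permanent-based formalism used throughout the manuscript and makes the emergence of the shift permutations visible at the matrix-element level, which ties it more directly to the boson-sampling literature it builds on.
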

\begin{proof}
    Using Eq.~\ref{eq: probability_perm}, we can re-write the probability of obtaining a specific $Q$-value (see Eq.~\ref{eq: prob_Q}) as:
    \begin{equation}\label{eq: prob_Q_y}
        P_{\vec{r}}(Q=y) = \sum\limits_{\{\vec{s}\}|Q(\vec{s}) = y}\frac{1}{\prod_j s_j!}\sum\limits_{\tau \in \mathsf{S}_n} \left[\prod\limits_{j=1}^n \mathrm{S}_{j, \tau(j)}\right] \mathsf{Per}(M_{i, d_j(\vec{s})} \odot M_{\tau(i), d_j(\vec{s})}^*) \enspace ,
    \end{equation}
    where we have used the fact that for $n=m$ photons and a single photon in each input mode, i.e. $d_j(\vec{r}) \in (1, ..., n)$ then we write the effective scattering matrix as: $M_{j, d_j(\vec{s})} \in \mathbb{C}^{n \times n}$ since $d_j(\vec{r}) = (1, ..., n)$ always.
    
    For clarity we denote the term in Eq.~\ref{eq: prob_Q_y} that multiplies $\prod_jS_{j,\tau(j)}$ as $c_\tau(\vec{s}) = \mathsf{Per}(M_{i, d_j(\vec{s})} \odot M_{\tau(i), d_j(\vec{s})}^*)$ and define an equivalent coefficient for the $Q$-value:
    \begin{equation}
        c_\tau(Q=y) := \sum\limits_{\{\vec{s}\}|Q(\vec{s}) = y} \frac{1}{\prod_j s_j!} \mathsf{Per}(M_{i, d_j(\vec{s})} \odot M_{\tau(i), d_j(\vec{s})}^*) 
    \end{equation}
    
    Substituting the effective QFT scattering matrix $M_{i, d_j(\vec{s})} := \frac{1}{\sqrt{n}} \omega^{i d_j(\vec{s})} \enspace$, where, of course, $\omega = e^{2\pi i/n}$.
    
    We  can expand the permanent term as:
    \begin{equation}
    \begin{split}\label{eq: simplify_perm}
        \mathsf{Per}(M_{i, d_j(\vec{s})} \odot M_{\tau(i), d_j(\vec{s})}^*)   
        &=   \sum_{\sigma \in \mathsf{S}_n}\prod_{i=1}^{n}\frac{1}{n}\omega^{(i-\tau(i))d_{\sigma(i)}(\vec{s})} \\
        &= \frac{1}{n^n} \sum_{\sigma \in \mathsf{S}_n} \omega^{\sum \limits_i (i-\tau(i))d_{\sigma(i)}(\vec{s})}\enspace ,
    \end{split}
    \end{equation}
    where we have collected the constant prefactor $\frac{1}{n^n}$ and rewritten the product of roots of unity as a single term.
    
    In order to get to our result we introduce the following Lemma.
    \begin{lemma}\label{lem: c_tau}
        The value of the coefficient $c_\tau(Q=y)$ depends only on shift permutations:
        \begin{equation}
        \label{eq:c_tau_0}
            c_\tau(Q=y) = \begin{cases}
                \frac{\omega^{yj}}{n}  & \text{if } \tau(i) = i+j \mod n \\
                0 & \text{otherwise}
            \end{cases}
        \end{equation}
    \end{lemma}
    To prove this, we re-write our coefficient using Eq.~\ref{eq: simplify_perm}, defining the permutation vector $p_i(\tau) = i-\tau(i)$ for ease of notation:
    \begin{equation}
    \label{eq:c_tau}
        c_\tau(Q=y) = \frac{1}{n^n} \sum\limits_{\{\vec{s}\}|Q(\vec{s})=y}\frac{1}{\prod\limits_j s_j!}\sum\limits_{\sigma \in \mathsf{S}_n}\omega ^{\sum\limits_i p_i d_{\sigma(i)}(\vec{s})} \enspace .
    \end{equation}

    For a fixed outcome $\vec{s}$ for any $s_j > 1$ there is a corresponding entry in $d(\vec{s})$ that is repeated $s_j > 1$ times. Therefore, $d(\vec{s})$ will be symmetric under $s_j!$ permutations that shuffle these repeated elements. Extending this argument to $s_j>0$ and noting that permutations that move distinct groups of elements necessarily commute, we find that the subgroup of the symmetries $\mathcal{S}_{\vec{s}} \preceq \mathsf{S}_n$ of $d(\vec{s})$ consists of $\prod_j s_j!$ different permutations.

    Factoring out these symmetries, only $n!/\prod_j s_j!$ distinct permutations of $d(\vec{s})$ are possible and they are listed by the elements of the factor group $\mathsf{S}_n/\mathcal{S}_{\vec{s}}$. This construction means we can enumerate the $n!$ permutations as products $\sigma \sigma'$, with $\sigma\in \mathsf{S}_n/\mathcal{S}_{\vec{s}},\sigma'\in \mathcal{S}_{\vec{s}}$. Since there are $\sigma'$ symmetries of $d(\vec{s})$, we are able to extract a factor $|\mathcal{S}_{\vec{s}}| =\prod_j s_j!$ from the summation in Eq.~\ref{eq:c_tau} :
    \begin{equation}
        \sum_{\sigma \in \mathsf{S}_n} (\hdots) =  \sum_{\sigma \in \mathsf{S}_n/\mathcal{S}_{\vec{s}}}\sum_{\sigma' \in \mathcal{S}_{\vec{s}}} (\hdots) = \prod_j s_j \sum_{\sigma \in \mathsf{S}_n/\mathcal{S_{\sigma}}}(\hdots)
    \end{equation}

    Resulting in the following rewriting:
    \begin{equation}
    \label{eq:c_tau_2}
        c_\tau(Q=y) = \frac{1}{n^n} \sum\limits_{\{\vec{s}\}|Q(\vec{s})=y}\sum\limits_{\sigma \in \mathsf{S}_n/\mathcal{S}_{\vec{s}}}\omega ^{\sum\limits_i p_i d_{\sigma(i)}(\vec{s})} \enspace .
    \end{equation}
    
    The double sum $\sum\limits_{\{\vec{s}\}|Q(\vec{s})=y} \sum\limits_{\sigma \in \mathsf{S}_n/\mathcal{S}_{\vec{s}}}$ enumerates over the distinct permutations of all distinct outcome vectors $d(\vec{s})$ that produce a $Q=y$.
    
    The pairs $(d(\vec{s}),\sigma)$ are in bijection with the the tuples $(d_1, d_2, \hdots, d_{n-1}) \in \{0,\hdots,n-1\}^{n-1}$ where $d_i$ (namely the position of photon $i$) is freely chosen in $\{0,\hdots,n-1\}$ and $d_n$ is fixed to the unique value that produces $Q(\vec{s}) = y$. The application of $\sigma$ to $d(\vec{s})$ produces a unique element $\vec{d} \in \{0,\hdots,n-1\}^{n-1}$, while a pair $(d(\vec{s}), \sigma)$ can be constructed choosing $\sigma^{-1}$ to be the unique permutation (up to symmetries) that sorts $\vec{d}$, so that $d(\vec{s}) = \sigma^{-1} \circ \vec{d}$.
    
    The condition that $Q(\vec{s}) = y$ can then be enforced by:
    \begin{equation}
        d_n = y - \sum_{i=1}^{n-1}d_i \mod n
    \end{equation}

    Substituting $d_{\sigma(i)}(\vec{s}) \rightarrow \vec{d}$ into Eq.~\ref{eq:c_tau_2}, we obtain:
    
    \begin{equation}
    \label{eq:c_tau_3}
    \begin{split}
        c_\tau(Q=y) &= \frac{1}{n^n} \sum_{d_1,\hdots,d_{n-1} \in \{0,\hdots,n-1\}}\omega ^{\sum\limits_{i=1}^n p_i d_i} \enspace \\
        & = \frac{\omega^{yp_n}}{n^n} \sum_{d_1,\hdots,d_{n-1} \in \{0,\hdots,n-1\}}\omega ^{\sum\limits_{i=1}^{n-1} (p_i-p_n) d_i} \enspace\\
        & = \frac{\omega^{yp_n}}{n^n} \sum_{d_1 \in \{0,\hdots,n-1\}}\omega ^{ (p_1-p_n) d_1} \sum_{d_2 \in \{0,\hdots,n-1\}}\hdots\sum_{d_{n-1} \in \{0,\hdots,n-1\}}\omega ^{ (p_{n-1}-p_n) d_{n-1}} \enspace\\
    \end{split}
    \end{equation}
    where in the first step we've operated the substitution $p_n = y-\sum_i d_i$ and in the second separated the summation over productories into a product of independent summations.
    Each of these summations over the roots of unity will be equal to 0 unless the following conditions on $p_i$ are met:
    \begin{equation}
        \sum_{d_i \in \{0,\hdots,n-1\}}\omega ^{ (p_i-p_n) d_i} = \begin{cases}
            n &\text{if } p_i =p_n\\
            0 & \text{otherwise}
        \end{cases}
    \end{equation}

    Leading to only $n$ allowed configurations for the permutation $\tau$:
    \begin{equation}
        p_i = j \text{ for } j \in \{0,\hdots,n-1\}
    \end{equation}

    Which translates to:

    \begin{equation}
    \begin{split}
        c_\tau (Q=y) = \frac{\omega^{yj}}{n} \quad \text{ for } p_i = j
    \end{split}
    \end{equation}

    The solution $p_i = j$ corresponds to the shift permutation $\tau(i) = i+j \mod n$. Now, we can substitute the value of $c_\tau(Q=y)$ into Eq.~\ref{eq: prob_Q_y}, taking the sum over $j=0, ..., n-1$ and noting that the distinguishability matrix term becomes $\prod\limits_{i=1}^n \mathrm{S}_{i, i+j}$ since the only permutation that contributes is $\tau(i) = i + j \mod n$. Therefore, we obtain the claim of Lemma~\ref{lem: c_tau}.

\end{proof}

\subsection{Quantifying the genuine $n$-photon indistinguishability ($c_1$) when $n=m$ with one photon per mode}\label{sec: c_1 for n=m}

In this section, we prove our main results. Let $\rho$ be an unknown partially distinguishable mixed state as seen in Eq.~\ref{eq: partially_distinguishable} with $n=m$ photons and exactly one photon per mode, repeated here for convenience: 
\begin{equation}\label{eq: density_matrix}
    \rho = c_1 \rho^\parallel + \sum\limits_k c_k \rho_k^\perp \enspace ,
\end{equation}

Focusing on the second term $\sum_k c_k \rho_k^\perp$, if we assume that each $\rho_k^\perp$ has an equal probability given a specific post-selection strategy then we can reformulate Eq.~\ref{eq: density_matrix}. We do this by introducing the fact that we post-select on states $\vec{s}$ where $Q(\vec{s}) \neq 0$ and therefore the probability of retrieving such an output state is: 

\begin{equation}\label{eq: c_1_probs}
    P(Q\neq0)=c_1 P_{i}(Q\neq0) + (1-c_1)P_{d}(Q\neq0)
\end{equation}

where $P_d$ denotes the probability of any partially distinguishable partition state, and $P_i$ the probability of the fully indistinguishable state.
 
Due to the \ZTL{} (see Eq.~\ref{eq: ZTL}), we know that in this setting $P_i(Q\neq0) = 0$ and we prove in this section that when $\rho_k^\perp$ are OBB states (for all $n$), or if they are general partition states (when $n$ is prime) then $P_d(Q\neq0) = 1-1/n$ and therefore

\begin{equation}
\begin{split}
P(Q\neq0) &= (1-c_1)(1-1/n) \\
c_1 &= \frac{(1-1/n)-P(Q\neq0)}{1-1/n} \enspace ,
\end{split}
\end{equation}

which means we can compute $c_1$ by determining $P(Q\neq0)$ experimentally and substituting in our analytical probabilities. 

\subsubsection{For OBB input states}\label{sec: OBB}
\begin{theorem}[OBB probability]\label{thm: OBB}
    Given an initial $n$-photon state $\rho$ of the form of Eq.~\ref{eq: density_matrix} subjected to an interferometer that implements an $n\times n$ QFT matrix, the total transition probability of observing output events $\vec{s}$ such that $Q(\vec{s}) \neq 0$ (see Definition~\ref{defn: q_value}) for $\rho_k^{\perp}$ is $P(Q\neq 0) = 1 - 1/n$ for all $k$ and for all $n$, if they are OBB partition states.
\end{theorem}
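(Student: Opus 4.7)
The plan is to reduce the claim to a direct application of Theorem~\ref{th:q_marginal} (the Q-marginals formula), and then exploit the very rigid structure of the distinguishability matrix $\mathrm{S}$ of an OBB state. Because an OBB partition state $\rho_k^\perp$ consists of one register of $n-k$ mutually indistinguishable photons together with $k \ge 1$ singleton registers (each containing exactly one photon orthogonal to every other photon), the entry $\mathrm{S}_{i,i+j}$ is $1$ if and only if photons $i$ and $i+j$ (mod $n$) lie in the same register.

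First, I would plug the OBB state into Theorem~\ref{th:q_marginal}, obtaining
\begin{equation}
P(Q=k) \;=\; \frac{1}{n}\sum_{j=0}^{n-1} \omega^{kj}\, A_j, \qquad A_j := \prod_{i=1}^{n} \mathrm{S}_{i,i+j}\enspace .
\end{equation}
The heart of the argument is then to show that $A_j = \delta_{j,0}$ whenever the state has at least one singleton register. The key observation is that the shift $i \mapsto i+j \pmod{n}$ partitions $\{1,\dots,n\}$ into $\gcd(n,j)$ cyclic orbits, each of length $n/\gcd(n,j)$. For $A_j$ to be nonzero, every such orbit must be entirely contained in a single register. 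But a singleton (bad) register has size $1$, so the orbit through any bad photon must have length $1$, which forces $n/\gcd(n,j)=1$, i.e.\ $j\equiv 0 \pmod{n}$. Since $\rho_k^\perp$ with $k\ge 1$ always contains at least one singleton register, this rules out every $j\in\{1,\dots,n-1\}$, leaving only $A_0=1$.

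Substituting $A_j=\delta_{j,0}$ back gives
\begin{equation}
P(Q=k) \;=\; \frac{1}{n} \qquad \text{for every } k\in\{0,1,\dots,n-1\}\enspace ,
\end{equation}
which is a uniform distribution over the Q-marginals. Summing over the post-selected outcomes then yields
\begin{equation}
P(Q\neq 0) \;=\; \sum_{k=1}^{n-1} P(Q=k) \;=\; \frac{n-1}{n} \;=\; 1-\frac{1}{n}\enspace ,
\end{equation}
which is the claimed probability, and the same for every $k$ index of $\rho_k^\perp$ and every $n$, so that the discrimination form of Eq.~\ref{eq: c_1_probs} is realized with $P_d(Q\neq 0)=1-1/n$.

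The only nontrivial step is the orbit argument establishing $A_j=\delta_{j,0}$; the main obstacle is making the cycle-structure observation clean for arbitrary $n$ (not just prime), but because the singleton registers force the orbit-length-$1$ condition, the argument is uniform in $n$ and does not require the primality hypothesis that is needed for the general-partition case of Section~\ref{sec: c_1 for n=m}. All remaining steps are direct: Theorem~\ref{th:q_marginal} provides the closed form, and the discrete Fourier identity $\sum_j \omega^{kj}\delta_{j,0}=1$ collapses the sum trivially.
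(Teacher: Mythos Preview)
Your proof is correct, but it takes a genuinely different route from the paper's.

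The paper proves Theorem~\ref{thm: OBB} via the register decomposition of Eq.~\ref{eq: prob_Q_value}: it writes $P(Q=0)$ as a sum over all tuples $(Q_1,\dots,Q_K)$ with $\sum_j Q_j \equiv 0 \pmod n$, then uses that each singleton register contributes a flat factor $1/n$ for every $Q_j$-value and that the number of $(K-1)$-tuples in $\mathbb{Z}_n^{K-1}$ summing to any fixed residue is exactly $n^{K-2}$. This collapses the sum to $\sum_y P_{\vec r^{(1)}}(Q_1=y)\cdot n^{K-2}/n^{K-1} = 1/n$.

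You instead go straight through Theorem~\ref{th:q_marginal} and analyze $A_j=\prod_i \mathrm{S}_{i,i+j}$ via the orbit structure of the shift map. Your key step---that a singleton register forces any orbit through it to have length~$1$, hence $j=0$---is exactly the observation that every OBB state with $k\ge 1$ is \emph{non-periodic} (period $t=n$). In effect you have independently re-derived, for the OBB case, the content of Lemma~\ref{lm:period} together with Theorem~\ref{thm: uniform_t_probability}, which the paper only introduces \emph{after} this theorem to handle the general-partition prime-$n$ case. Your argument is shorter and more unified with that later machinery; the paper's argument has the advantage of being self-contained (it does not rely on Theorem~\ref{th:q_marginal}, whose proof is the heaviest piece of Section~\ref{sec: probabilities}) and of making the role of the single-photon registers probabilistically transparent.
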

\begin{proof}
    In order to show that $P(Q\neq 0) = 1 - 1/n$, we first determine the probability $P(Q=0)$. Partition states are made up of $K$ subsets of distinguishable or indistinguishable photons that we refer to as registers. An OBB partition state, as seen in Definition~\ref{defn: obb_state} consists of a single register of fully indistinguishable photons and then $K-1$ registers containing exactly one distinguishable photon that is fully orthogonal to all others. 
    
    Using Eq.~\ref{eq: prob_Q_value} we see that the total probability is:
    \begin{equation}
        P(Q=0) = \sum\limits_{\{Q_j\}|Q=0} \prod\limits_{j=1}^{K} P_{\vec{r}^{(j)}}(Q_j) \enspace ,
    \end{equation}

    now, since we know from Eq.~\ref{eq: Q_sub_mods} that $Q(\vec{s}) = \mod \left( \sum\limits_{j=1}^K Q_j, m\right)$ and knowing that $Q(\vec{s}) = 0, ..., n-1$, we can then split the above equation as follows:
    \begin{equation}
    \begin{split}
        P(Q=0) &= P_{\vec{r}^{(1)}}(Q_1=0)\sum\limits_{\{Q_j\}|\sum_j Q_j=0} \prod\limits_{j=1}^{K-1} P_{\vec{r}^{(j)}}(Q_j) \\
        &+ P_{\vec{r}^{(1)}}(Q_1=1)\sum\limits_{\{Q_j\}|\sum_j Q_j=n-1} \prod\limits_{j=1}^{K-1} P_{\vec{r}^{(j)}}(Q_j) \\
        &+\cdots + P_{\vec{r}^{(1)}}(Q_1=n-1)\sum\limits_{\{Q_j\}|\sum_j Q_j=1} \prod\limits_{j=1}^{K-1} P_{\vec{r}^{(j)}}(Q_j)
    \end{split} \enspace ,
    \end{equation}
    where register $1$ and register-state $\vec{r}^{(1)}$ contains the indistinguishable photons. For each register containing one distinguishable photon, after applying the QFT there is an equal probability that the photon will end up in any of the output modes and therefore the probability of any Q-value for registers $j=1, ..., K-1$ is $1/n$ and $\prod\limits_{j=1}^{K-1} 1/n = 1/n^{K-1}$, which leads to: 
    \begin{equation}
    \begin{split}
        P(Q=0) &= P_{\vec{r}^{(1)}}(Q_1=0)\sum\limits_{\{Q_j\}|\sum_j Q_j=0} 1/n^{K-1} \\
        &+ P_{\vec{r}^{(1)}}(Q_1=1)\sum\limits_{\{Q_j\}|\sum_j Q_j=n-1} 1/n^{K-1} \\
        &+\cdots + P_{\vec{r}^{(1)}}(Q_1=n-1)\sum\limits_{\{Q_j\}|\sum_j Q_j=1}  1/n^{K-1}
    \end{split} \enspace .
    \end{equation}
    We are now interested in the set of Q-values $\{Q_j\}|\sum_j Q_j = 0, ..., n-1$. It is well-known that for any fixed sum $x \in \mathbb{Z}_n$, the number of tuples $(x_1, ..., x_l) \in \mathbb{Z}_n^l$ satisfying $x_1 + ... + x_l \equiv x \mod n$ is exactly $n^{l-1}$~\cite{taoAdditive2006}. Therefore, for our $K-1$ values of $Q_j$, there are exactly $n^{K-2}$ arrangements that give our desired outputs. Since each sum over distinguishable registers gives the same number, we can group the probabilities for the indistinguishable register together, leading to:
    \begin{equation}
    \begin{split}
        P(Q=0) &= (P_{\vec{r}^{(1)}}(Q_1=0) + P_{\vec{r}^{(1)}}(Q_1=1) + \cdots P_{\vec{r}^{(1)}}(Q_1=n-1)) \frac{n^{K-2}}{n^{K-1}} \\ 
        &= 1 \cdot \frac{n^{K-2}}{n^{K-1}} = 1/n
    \end{split} \enspace ,
    \end{equation}
    where for the last step we have used the fact that the term in the brackets is a convex sum. Thus we complete the proof:
    \begin{equation}
        P(Q\neq0) = 1 - P(Q=0) = 1 - 1/n 
    \end{equation}
\end{proof}

\subsubsection{Computing $c_1$ for all partition states and $n$ prime}\label{sec: partition_results}
In order to get to our main results for general partition states, we first introduce a Lemma for the distinguishability matrix of $t$-periodic states.

\begin{lemma}
\label{lm:period}
Let $|\phi(\vec{r}_t)\rangle$ be a $t$-periodic partition state with an associated distinguishability matrix $\mathrm{S}$ and $j\in \mathbb{N}$, then:

\begin{equation}
    \prod_i S_{i,i+j} = 1 \Leftrightarrow j = kt \text{ for } k \text{ an integer}
\end{equation}

\end{lemma}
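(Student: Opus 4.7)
The plan is to translate the product condition into a purely combinatorial statement about register assignments, and then exploit the arithmetic-progression structure of each register's support. Since $S_{k,l}=1$ exactly when photons $k$ and $l$ inhabit the same register and $S_{k,l}=0$ otherwise, the condition $\prod_i S_{i,i+j}=1$ is equivalent to requiring that, for every index $i\in\{1,\ldots,n\}$ (read cyclically modulo $n$), photons $i$ and $i+j$ lie in the same register. Introducing the register-label map $R:\{1,\ldots,n\}\to\{1,\ldots,K\}$, this is the same as the invariance $R(i+j)=R(i)$ for all $i$, or equivalently the setwise fixedness of each register's support $P_\ell=\{i : R(i)=\ell\}$ under cyclic shift by $j$.

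For the $(\Leftarrow)$ direction I would use that $t=\mathrm{lcm}(t_1,\ldots,t_K)$ and $j=kt$ together imply $t_\ell \mid j$ for every $\ell$. By Definition~\ref{defn: t-periodic_partition}, each $\vec{r}_{t_\ell}^{(\ell)}$ is a $t_\ell$-periodic single-register Fock state with exactly one photon per period of length $t_\ell$, so $P_\ell$ is an arithmetic progression modulo $n$ with common difference $t_\ell$. Any such progression satisfies $P_\ell + j = P_\ell$ whenever $t_\ell \mid j$, so the shift fixes every register's support, giving $R(i+j)=R(i)$ and hence $\prod_i S_{i,i+j}=1$.

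For the $(\Rightarrow)$ direction I would argue register by register. If $P_\ell$ is setwise fixed by shift by $j$, then since $P_\ell$ is an arithmetic progression with common difference $t_\ell$ (and length $n/t_\ell$, using $t_\ell \mid n$), any shift stabilizing it must be a multiple of $t_\ell$; otherwise, shifting any element of $P_\ell$ by $j$ would land between consecutive progression points. Thus $t_\ell \mid j$ for every $\ell$, and collecting these divisibility constraints yields $t=\mathrm{lcm}(t_1,\ldots,t_K) \mid j$, i.e.\ $j=kt$ for some integer $k$.

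The main obstacle, which is really just a care point, is justifying in the forward direction that the minimal positive shift stabilizing the support of a single register is exactly $t_\ell$. This appeals to the explicit one-photon-per-period structure of $\vec{r}_{t_\ell}^{(\ell)}$ and to the fact that $n$ is a multiple of $t$, so the cyclic structure on $\{1,\ldots,n\}$ is compatible with each sub-progression and no boundary artefacts appear when wrapping around.
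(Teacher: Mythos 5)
Your reduction of $\prod_i S_{i,i+j}=1$ to the statement that the cyclic shift by $j$ fixes each register's support setwise is correct, and the register-by-register strategy (show the stabilizer of each $P_\ell$ consists of the multiples of $t_\ell$, then take the lcm) is a genuinely different route from the paper's. The paper works globally: it uses transitivity of the relation $\mathrm{S}_{ab}=1$ to show the set of ``good'' shifts is closed under addition, and then invokes the Euclidean algorithm to produce, from any good shift $t'$ with $\gcd(t,t')<t$, a period strictly smaller than $t$ — a contradiction with minimality. Both arguments implicitly require the periods to be minimal, which the lemma itself needs anyway.

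The one step of yours that does not hold at the stated generality is the claim that each register's support $P_\ell$ is an arithmetic progression with exactly one photon per period of length $t_\ell$. The definition of a $t$-periodic partition state only requires each register state to be a $t_\ell$-periodic Fock state, and the underlying definition of a $t$-periodic Fock state explicitly allows $n_t>1$ photons per period (e.g.\ a register with support $\{1,2,5,6\}\subset\mathbb{Z}_8$ and minimal period $t_\ell=4$); for such supports the ``lands between consecutive progression points'' argument does not apply. The conclusion you need still survives and can be justified by the same mechanism the paper uses, applied per register: the set of shifts fixing $P_\ell$ is a subgroup of $\mathbb{Z}_n$ containing $t_\ell$, so if it contained some $j$ with $t_\ell\nmid j$ it would contain $\gcd(j,t_\ell)<t_\ell$, making $P_\ell$ periodic with a smaller period and contradicting minimality of $t_\ell$. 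With that replacement your forward direction is sound, and the lcm bookkeeping and the reverse direction are fine as written.
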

\begin{proof}
    We know from Eq.~\ref{eq: S_matrix_periodic} that for $t$-periodic states $\prod\limits_i \mathrm{S}_{i, i+t} = 1$. Now, given that $\langle a|b \rangle = 1$ implies $\ket{a} = \ket{b}$ the following transitive property holds for the $\mathrm{S}$ matrix:
    \begin{equation}
        \mathrm{S}_{ab} =1, \mathrm{S}_{bc} = 1 \implies \mathrm{S}_{ac} = 1 \enspace .
    \end{equation}

    By repeated application of this property to the $\mathrm{S}$-matrix of a $t$-periodic state, we find:    
    \begin{equation}
    \label{eq: transitive}
    \begin{split}
        &\mathrm{S}_{i, i+t} = 1, \mathrm{S}_{i+t, i+2t} = 1 \implies \mathrm{S}_{i, i+2t} = 1 \\
        \text{ therefore }\  &\mathrm{S}_{i,i +kt} = 1 \text{ for } k\geq 0 
    \end{split}
    \end{equation}

    Finally, we prove the stated claim by contradiction. If we assume that there is a $t'\neq t$ such that $\prod_i S_{i,i+t'} = 1$ and $\mathrm{gcd}(t,t') < t$, applying the same argument of Eq.~\ref{eq: transitive} we deduce that $S_{i, i+kt +k't'} = 1$ for all $i,k,k'$.
    
    By applying the Euclidean algorithm~\cite{hardyIntroduction2008}, we can always find some integers $k,k'$ such that $kt + k't' = \mathrm{gcd}(t, t')$. It follows that $\mathrm{S}_{i, i+k \cdot\mathrm{gcd}(t,t')} = 1$ and we can infer that the state has a periodicity strictly smaller than $t$, leading to a contradiction.
\end{proof}

Joining the lemma above with Thm.~\ref{th:q_marginal} we obtain an important characterization of the behavior of the Q-marginals:

\begin{theorem}[Uniform probability for $t$-periodic partition states]\label{thm: uniform_t_probability}
Given an initial $t$-periodic partition state as in Definition~\ref{defn: t-periodic_partition}, for $n=m$ photons and one photon per mode, the transition probability for any $Q$-value is:
\begin{equation}
    P_{\vec{r}_t}(Q=y) = \begin{cases}
        \frac{1}{t} \ \text{if} \ k=0 \mod \frac{n}{t} \\ 
        0 \  \text{otherwise}
    \end{cases} \enspace ,
\end{equation}
\end{theorem}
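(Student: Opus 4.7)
The plan is to apply the Q-marginals formula from Theorem~\ref{th:q_marginal} and then collapse the sum using the periodicity characterization from Lemma~\ref{lm:period}. Specifically, starting from
\begin{equation}
    P_{\vec{r}_t}(Q=y) = \frac{1}{n}\sum_{j=0}^{n-1}\omega^{yj} \left[\prod_{i=1}^n \mathrm{S}_{i, i+j}\right],
\end{equation}
I would observe that, by Lemma~\ref{lm:period}, each bracketed product is either $1$ or $0$, and is exactly $1$ precisely when $j$ is an integer multiple of $t$. Since $t$ is assumed to divide $n$ (so that the $t$-periodic pattern tiles the $n$ modes), the surviving indices are $j = 0, t, 2t, \dots, (n/t-1)t$.

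Next, I would substitute $j = kt$ and rewrite the result as a geometric sum:
\begin{equation}
    P_{\vec{r}_t}(Q=y) = \frac{1}{n}\sum_{k=0}^{n/t - 1}\omega^{ykt} = \frac{1}{n}\sum_{k=0}^{n/t - 1}\left(\omega^{t}\right)^{yk}.
\end{equation}
Here $\omega^{t} = e^{2\pi i t/n}$ is a primitive $(n/t)$-th root of unity. The standard identity for sums of roots of unity then gives two cases: if $y \equiv 0 \bmod (n/t)$, then every term equals $1$ and the sum contributes $n/t$, yielding $P_{\vec{r}_t}(Q=y) = \frac{1}{n}\cdot\frac{n}{t} = \frac{1}{t}$; otherwise the geometric series vanishes, giving $P_{\vec{r}_t}(Q=y) = 0$. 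This matches the two branches of the claim.

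The only subtle point, and the step I would be most careful about, is justifying that the relevant $j$-indices are exactly the multiples of $t$ lying in $\{0,1,\dots,n-1\}$. Lemma~\ref{lm:period} as stated gives the iff characterization $\prod_i \mathrm{S}_{i,i+j}=1 \Leftrightarrow j = kt$; I would need to make sure the indexing $i+j$ is understood cyclically modulo $n$ (consistent with the QFT setting in which $\omega$ is an $n$-th root of unity), and that $t \mid n$ so that the period aligns with the $n$ modes—both consistent with Definition~\ref{defn: t-periodic_partition}. Beyond this bookkeeping, the proof is a one-line application of the geometric-sum identity and incurs no further computation.
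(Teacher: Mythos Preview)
Your proposal is correct and follows essentially the same route as the paper: invoke the Q-marginals formula (Theorem~\ref{th:q_marginal}), use Lemma~\ref{lm:period} to restrict the sum to $j=kt$, and then evaluate the resulting geometric sum over $(n/t)$-th roots of unity. Your version is in fact slightly more careful with the summation bounds and the bookkeeping on $t\mid n$ and cyclic indexing.
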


\begin{proof}
    Rewriting Eq.~\ref{eq:q_marginal} with only the non-zero terms $j = kt, k\geq 0$ we have:
    \begin{equation}
    \begin{split}
         P_{\vec{r}_t}(Q=y) &= \frac{1}{n}\sum_{k=0}^{n/t}\omega^{tky} \\
        &= \frac{1}{n}\sum_{k=0}^{n/t}(\omega^{t})^{ky} \\
        & = \frac{1}{t} \delta_{y \text{ mod } \frac{n}{t}} \\
    \end{split}
    \end{equation}

Where the last line follows from summing the $n/t$-th roots of unity, $\delta_{y \text{ mod } \frac{n}{t}}=1$ if $y = 0 \text{ mod } \frac{n}{t}$, and 0 otherwise.
\end{proof}

Now, we are in a position to compute $c_1$ for all parition states. Analogously to Thm.~\ref{thm: OBB}, when we have $n$-prime, we must prove that all the possible partition inputs result in output states where $P(Q\neq0) = 1-1/n$. We introduce the following theorem:
\\

\begin{theorem}[Prime $n$ partitions probability]
    Given an initial $n$-photon state $\rho$ of the form of Eq.~\ref{eq: density_matrix} subjected to an interferometer that implements an $n\times n$ QFT matrix, the total transition probability of observing output events $\vec{s}$ such that $Q(\vec{s}) \neq 0$ (see Definition~\ref{defn: q_value}) for $\rho_k^{\perp}$ is $P(Q\neq 0) = 1 - 1/n$ for all $k$ and for $n$-prime.
\end{theorem}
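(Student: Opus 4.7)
The plan is to reduce the statement to Theorem~\ref{th:q_marginal} together with a short group-theoretic argument that is unlocked by the primality of $n$.

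First I would apply the $Q$-marginals formula to any partition state $\rho_k^\perp$, writing
\begin{equation*}
P(Q=y) = \frac{1}{n}\sum_{j=0}^{n-1}\omega^{yj}\,f(j), \qquad f(j) := \prod_{i=1}^{n}\mathrm{S}_{i,i+j}.
\end{equation*}
Since each $\mathrm{S}_{i,l}\in\{0,1\}$, the quantity $f(j)\in\{0,1\}$, and $f(j)=1$ precisely when the cyclic shift $i\mapsto i+j \bmod n$ maps every photon into another photon in the same register.

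Next I would observe that $H:=\{j\in\mathbb{Z}_n : f(j)=1\}$ is a subgroup of $\mathbb{Z}_n$. The identity lies in $H$ because $f(0)=1$; closure follows from the transitivity property $\mathrm{S}_{a,b}=\mathrm{S}_{b,c}=1\Rightarrow \mathrm{S}_{a,c}=1$ already used in Lemma~\ref{lm:period} (applied to the chain $i\to i+j_1\to i+j_1+j_2$ for every $i$); inverses are provided by $n-j$. For $n$ prime, Lagrange's theorem forces $H=\{0\}$ or $H=\mathbb{Z}_n$. The case $H=\mathbb{Z}_n$ would make every pair of photons indistinguishable, i.e.\ the state fully indistinguishable, which is excluded by the assumption that $\rho_k^\perp\neq \rho^\parallel$. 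Hence $H=\{0\}$, so $f(0)=1$ and $f(j)=0$ for every $j\neq 0$.

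Plugging this back into the $Q$-marginals formula I immediately obtain $P(Q=y)=\tfrac{1}{n}$ for every $y\in\{0,\dots,n-1\}$, so that $P(Q\neq 0)=1-\tfrac{1}{n}$, as claimed. This also agrees with, and in fact reproves, the specialisation of Theorem~\ref{thm: uniform_t_probability} at $t=n$ announced in the main text, reflecting that the only admissible periods of partition states on $n=m$ modes are the divisors of $n$, namely $\{1,n\}$ when $n$ is prime. The only real obstacle is the closure argument for $H$; everything else is routine. The primality of $n$ enters in exactly one place: to exclude any proper non-trivial subgroup of $\mathbb{Z}_n$, and this is precisely why the argument collapses for composite $n$, motivating the pseudo-inverse treatment of Section~\ref{sec: non-prime-main}.
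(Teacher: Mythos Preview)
Your argument is correct. The key mathematical content is the same as the paper's: the set $\{j:\prod_i \mathrm{S}_{i,i+j}=1\}$ is a subgroup of $\mathbb{Z}_n$, and for prime $n$ this forces it to be $\{0\}$ for any $\rho_k^\perp$, whence $P(Q=y)=1/n$ uniformly.

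The only difference is packaging. The paper routes the argument through the notion of period: it first proves Lemma~\ref{lm:period} (whose transitive/Euclidean-algorithm step is exactly your closure argument for $H$), then Theorem~\ref{thm: uniform_t_probability} for general $t$, and finally observes that primality of $n$ leaves only the periods $t\in\{1,n\}$. You instead go directly from Theorem~\ref{th:q_marginal} to the conclusion, invoking Lagrange's theorem in place of the divisor language. Your route is slightly more self-contained for this particular statement, while the paper's detour through Theorem~\ref{thm: uniform_t_probability} pays off elsewhere because that result is reused verbatim in the non-prime analysis of Section~\ref{sec: non-prime_partition_results}. Both are equally valid and equally short once the $Q$-marginals formula is in hand.
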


\begin{proof}
    When $n$ is prime, any partition state is what we call \emph{non-periodic} which means it either has period $t=1$ or period $t=n$. This can be seen directly by looking at Definition~\ref{defn: t-periodic_partition} and noticing that if $n$ is prime, then there does not exist an integer $t$ such that $n/t \in \mathbb{Z}$, i.e. the period can only be $t=n$ if we have a partition state or $t=1$ for the fully indistinguishable state. 

    Using the same logic as Theorem~\ref{thm: OBB}, we will determine the value of $P(Q=0)$ first. In this case, we can use an equivalent expression to Eq.~\ref{eq: prob_Q_value}, which is Eq.~\ref{eq: prob_Q} that we re-write here for the case when $Q=0$
    \begin{equation}\label{eq: repeat_prob_Q}
        P_{\vec{r}}(Q=0) := \sum\limits_{\{\vec{s}\}|Q(\vec{s}) = 0} P_{\vec{r} \rightarrow \vec{s}} \enspace ,
    \end{equation}
    and now from Theorem~\ref{thm: uniform_t_probability} we know that all $t$-periodic partition states have the property that $P(Q=y) = 1/t$ if $y= 0 \ \text{mod} \ n/t$ and since $t=n$ this means that $P(Q=y) = 1/n$ since any integer modulo one is equal to zero and Eq.~\ref{eq: repeat_prob_Q} simply becomes:
    \begin{equation}
        P(Q=0) = 1/n \enspace ,
    \end{equation}
    which completes the proof.
\end{proof}

\subsubsection{Computing $c_1$ for all partition states and non-prime $n$}\label{sec: non-prime_partition_results}
According to Theorem~\ref{thm: uniform_t_probability} given an input state $\rho$ of the form of Eq.~\ref{eq: density_matrix} and applying an $n\times n \ QFT$, the output probability of any $Q$-value, $P_{\vec{r}_t}(Q=i)$, only depends on the period $t$ of each partition in the mixture. Since a state with $n$ a non-prime does not necessarily have period $t=n$, we can re-write Eq.\ref{eq: density_matrix} by decomposing it into partition states with different periodicities:

\begin{equation}\label{eq:2}
    \rho = c_{t_1}\cdot \rho_{t_1} + \sum_j c_{j,t_2}\cdot \rho_{j,t_2} + ...+\sum_jc_{j,t_{x(n)}}\cdot  \rho_{j,t_{x(n)}}
\end{equation}
Where $\{t_1,t_2,...,t_{x(n)}\}$ is an ordered list containing the $x(n)$ divisors of $n$ with $t_1=1$ and $t_{x(n)}=n$. Hence $\rho_{j,t_i}$ are all partitions with periodicity $t_i$, and $\rho_{t_1}$ is the fully indistinguishable partition state as it has periodicity $t_1=1$. Consequently, in Eq.\ref{eq:2}, $c_{t_1}$ is GI.

We now use Thm.~\ref{thm: uniform_t_probability} to recover the Q-probability $P(Q=i)$ produced by $\rho$.
Noting that all terms with the same periodicity $t_k$ produce the same $P_{t_k}(Q=i)$ and defining $c_{t_i}:=\sum_jc_{j,t_i}$ we have:
\begin{equation}
\label{eq:pqi}
    P(Q=i) = P_{t_1}(Q=i)\cdot c_{t_1} + P_{t_2}(Q=i)\cdot c_{t_2} +...+ P_{t_{x(n)}}(Q=i)\cdot  c_{t_{x(n)}}
\end{equation}

We now define two vectors, one containing the total state output probabilities $\vec{P}= (P(Q=0), P(Q=1), ...,P(Q=n-1))$ whose indices take values in $\{1,\hdots,n\}$ and the vector of periodicities $\vec{c} = (c_1,\hdots,c_n)$ whose indices take values among the $x(n)$ divisors of n.
With this notation Eq.~\ref{eq:pqi} can be cast into a system of $n$ equations in $x(n)$ unknown variables, described by a coefficient matrix $A$:

\begin{equation}\label{system}
    \vec{P} =  A\cdot \vec{c}
\end{equation}

Where the matrix elements of $A$ can be recovered from Thm.~\ref{thm: uniform_t_probability}:

\begin{equation}\label{coef_matrix}
    A_{ij} = P_{j}(Q=i) = \frac{1}{j}\delta_{i\text{ mod } \frac{n}{j}}
\end{equation}

Even though $A \in \mathbb{R}^{n \times x(n)}$ is a rectangular matrix ($x(n) < n$ for $n>2$), we will show that its columns are always linearly independent. This will guarantee that $A$ has full column rank and thus the existence of its Moore–Penrose pseudoinverse $A^+$ \cite{mooreReciprocal1920, penroseGeneralized1955}.
It is therefore always possible to obtain a least-square estimate of $c_{t_1}$ from the experimentally measured output distribution $\vec{P}$ as:

\begin{equation}
    c_{t_1} = \sum_j A^+_{1j} \cdot P(Q=j)
\end{equation}

\begin{proposition}
    $A$ has full column rank and its pseudo inverse $A^+$ takes the form:
    $$
    A^{+}_{ij} = \begin{cases}
     \frac{i}{\varphi(g)} \mu(\frac{g}{i}) &\text{if } gcd(i,g) = i\\
    0 &\text{otherwise}
    \end{cases}
    $$

    Where $g =\frac{n}{gcd(n,j)}$
\end{proposition}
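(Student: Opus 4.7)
The plan is to first establish full column rank of $A$ via a Möbius inversion argument on the divisor lattice of $n$, and then verify that the stated formula satisfies the defining properties of the Moore--Penrose pseudo-inverse by direct computation. Because full column rank of $A$ already forces $A^+$ to be its unique left inverse with the extra symmetry property $(AA^+)^T = AA^+$, the essential content reduces to showing $A^+A = I$ and that $AA^+$ is symmetric; the two remaining Moore--Penrose conditions then follow automatically.

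For the first step, I would suppose $\vec{\lambda}$ lies in the kernel of $A$, so that $\sum_{t\mid n} A_{it}\,\lambda_t = 0$ for every $i\in\{1,\ldots,n\}$. Since $A_{it} = (1/t)\mathbf{1}[(n/t)\mid i]$ depends on $i$ only through $d := \gcd(n,i)$ (because $n/t$ divides $n$), it suffices to specialize to $i = d$ for each divisor $d$ of $n$. Reparameterizing the surviving $t$'s as $t = (n/d)e$ with $e\mid d$, these relations become $\sum_{e\mid d}\lambda_{(n/d)e}/((n/d)e) = 0$ for every $d\mid n$, which are precisely the hypotheses of Möbius inversion on the divisor lattice of $n$. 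Inversion forces every $\lambda_t$ to vanish, so the columns of $A$ are linearly independent and $A^+$ exists uniquely.

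For the second step I will compute $A^+A$ and $AA^+$ directly from the closed form. To evaluate $(A^+A)_{tk}$ for divisors $t,k$ of $n$, I group the sum over $j\in\{1,\ldots,n\}$ by the value $g = g(j) = n/\gcd(n,j)$; for each $g\mid n$ there are exactly $\varphi(g)$ indices $j$, and the joint constraints $t\mid g$ and $(n/k)\mid j$ translate into $t\mid g\mid k$. The $\varphi(g)$ factors cancel, and substituting $f = g/t$ reduces the expression to $(t/k)\sum_{f\mid k/t}\mu(f)$, which equals $\mathbf{1}[t=k]$ by the identity $\sum_{f\mid m}\mu(f) = [m=1]$. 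For $(AA^+)_{jk}$ an analogous grouping yields $\frac{1}{\varphi(g(k))}\sum_{e\mid g(k)/g(j)}\mu(e)$ whenever $g(j)\mid g(k)$, which collapses to $\frac{1}{\varphi(g(k))}\mathbf{1}[g(j)=g(k)]$, manifestly symmetric in $j$ and $k$.

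The main obstacle will be the index bookkeeping: correctly counting the $\varphi(g)$ indices $j\in\{1,\ldots,n\}$ with prescribed $\gcd(n,j) = n/g$, tracking which changes of variable $i\mapsto g/i$ and $g\mapsto g/t$ stay within the divisor lattice, and keeping the asymmetric row/column conventions of $A$ and $A^+$ straight throughout. Once these combinatorial details are in place, both verifications reduce to a single application of the Möbius identity $\sum_{e\mid m}\mu(e)=[m=1]$.
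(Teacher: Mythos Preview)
Your proof is correct, but it takes a different route from the paper's. The paper \emph{constructs} $A^+$ by factorising $A = R\cdot D\cdot N$ into a rectangular ``grouping'' matrix $R_{ij}=\delta_{j=n/\gcd(i,n)}$, the divisibility matrix $D_{ij}=\delta_{i\mid j}$, and the diagonal $N_{ij}=\delta_{ij}/j$; it then inverts $N$ trivially, inverts $D$ via the M\"obius function, computes $R^+$ from the definition, and multiplies $N^{-1}D^{-1}R^+$ to obtain the closed form. Full column rank is read off from $R$ alone. By contrast, you \emph{verify} the stated formula directly: a kernel argument for full column rank, then explicit evaluation of $A^+A$ and $AA^+$ by grouping over $g=n/\gcd(n,j)$ and collapsing with $\sum_{e\mid m}\mu(e)=[m=1]$. (Incidentally, your first step is not strictly needed: once you show $A^+A=I$, full column rank and Moore--Penrose conditions 1, 2, 4 are automatic, so only the symmetry of $AA^+$ remains.) The paper's factorisation is more constructive---it explains \emph{where} the formula comes from and is reused later to bound the cost of applying $A^+$ in the sample-complexity analysis---whereas your verification is more self-contained and avoids having to compute $R^+$ separately. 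Both ultimately rest on the same M\"obius identity, just invoked at different stages.
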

\begin{proof}
    First of all, the role of the factor $g$ can be exposed by rewriting the the delta function of $A$ in terms of divisibility:
    \begin{equation}
        \delta_{i\text{ mod } \frac{n}{j}} = \delta_{gcd(j,\frac{n}{gcd(n,i)}) =\frac{n}{gcd(n,i)}} = \delta_{gcd(j,g') = g'}
    \end{equation} 

    Where, as done in the statement of the proposition we introduced the shorthand $g' = \frac{n}{gcd(n,i)}$
    This relation shows that $A$ can be written as the product of two invertible matrices $D,N$ and a rectangular block diagonal matrix $R$ with full column rank:

\begin{equation}
    A = R\cdot D \cdot N
\end{equation}
\begin{equation}
\begin{split}
    R_{ij} &= \delta_{ j= \frac{n}{gcd(i,n)}} \\
    D_{ij} &= \delta_{gcd(i,j) = i} \\
    N_{ij} &= \frac{1}{j}\cdot \delta_{j,i}
\end{split}
\end{equation}

Where the indeces of the matrices $x(n)$ and $N$ and the column index of $R$ take value among the divisors of $n$ and the row index of $R$ takes values in $\{1,\hdots,n   \}$.
The matrix $N$ can be inverted at sight, while the inverse of the divisibility matrix $D$ can be expressed via the M\"obius function \cite{rotaFoundations1964}:
\begin{equation}
    D^{-1}_{ij} = \mu(\frac{j}{i}) \delta_{gcd(i,j) = i}
\end{equation}
Finally considering $R$, we note that for each $i$, $gcd(i,n)$ takes a unique value and there's thus a unique $j$ such that $gcd(i,n) = \frac{n}{j}$.
Rows of $R$ then never contain more than one value, this ensures linear independence between columns and thus full column rank for $R$ and $A$.
The pseudo-inverse $R^+$ can be computed from the definition \cite{penroseGeneralized1955} and results in:
\begin{equation}
    R^+_{ij} = \frac{1}{\varphi(i)} \delta_{i = \frac{n}{gcd(j,n)}}
\end{equation}

Where now the column index takes values in $\{1,\hdots,n\}$, the row index among the divisors of $n$ and $\varphi(n)$ is Euler's totient function \cite{hardyIntroduction2008}.
The multiplication :
\begin{equation}
\label{eq:pinv_decompose}
N^{-1} \cdot D^{-1}\cdot R^{+} 
\end{equation}
yields the claim.
\end{proof} 

\subsection{Optimality of scaling}\label{sec: optimality}
We have seen that post-selecting on very likely output events of the QFT ($Q\neq0)$ can be used to compute $c_1$ as $P(Q \neq 0)=(1-c_1)(1-\frac{1}{n})$, for many relevant scenarios such as when $n$ is prime or when dealing only with OBB partition states. We will refer to the $1-\frac{1}{n}$ scaling as the \emph{success probability}. 

In this section we prove that this success probability is \emph{optimal}. Indeed, we will show that \emph{any} interferometer and \emph{any} post-selection strategy cannot compute $c_1$ for \emph{any} unknown input state $\rho$ with better than $1-\frac{1}{n}$ success probability.  We note that this optimality result applies to the case when $n$ is prime, and the case where the partition states are all OBB. Note however, that although we do not prove optimality in the most general case ( $\forall$ n, $\forall \rho$), nevertheless in section \ref{sec:numerics} we provide strong numerical evidence that the QFT also massively outperforms the CI for estimating $c_1$ in the general case.

Note that the singly distinguishable state $\rho^s$, where $n-1$ photons are fully indistinguishable and the remaining one is fully orthogonal to the rest, is the closest partition state that we can get to the fully indistinguishable one $\rho^\parallel$. As shown in Stanisic and Turner~\cite{stanisicDiscriminating2018}, this is intuitively the most difficult state to unambiguously discriminate against the fully indistinguishable one. Therefore in the worst-case our unknown input state will be
\begin{equation}\label{eq: construct}
    \rho = c_1 \rho^\parallel + (1-c_1) \rho^s \enspace ,
\end{equation}
we use this worst-case scenario to prove our optimality theorem.

\begin{theorem}[Optimality of our protocol]\label{thm: Optimality}
    Given an initial $n$-photon state $\rho$ of the form of Equation  \ref{eq: density_matrix} where all $\rho_k^{\perp}$ are OBB partition states, the QFT and post-selecting on states such that $Q \neq 0$, is optimal for the task of estimating $c_1$. The same holds if $\rho_k^{\perp}$ are general partition states and $n$ is prime.
\end{theorem}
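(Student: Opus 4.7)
The plan is to reduce the general optimization to a binary discrimination between $\rho^\parallel$ and the singly-distinguishable state $\rho^s$, and then invoke the Stanisic--Turner bound directly. First, I would observe that under the hypotheses of the theorem --- either every $\rho_k^\perp$ is OBB (Theorem~\ref{thm: OBB}) or $n$ is prime so that every partition state has $t=n$ and therefore uniform post-selected probability (Theorem~\ref{thm: uniform_t_probability}) --- any post-selection-based estimator takes the form $P(s\in\mathcal{S}) = c_1 P_i + (1-c_1)P_d$ with $P_i=0$ by design, so that the estimation precision is governed entirely by $P_d$. Optimality therefore amounts to showing $P_d \leq 1-1/n$ for every choice $\{U,\mathcal{S}\}$ consistent with $P_i=0$.

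Next, I would restrict to the worst-case binary input $\rho = c_1\rho^\parallel + (1-c_1)\rho^s$ of Eq.~\ref{eq: construct}, where $\rho^s$ has $n-1$ indistinguishable photons and one orthogonal photon. This state is a legitimate OBB instance with a single orthogonal register, and is also a valid (non-trivial) partition state when $n$ is prime, so it belongs to the input family in both cases. Since any admissible protocol must work uniformly across the family, its $P_d$ on this instance --- equal to $P(s\in\mathcal{S}\mid\rho^s)$ --- is upper-bounded by the Stanisic--Turner optimum for linear-optical discrimination of $\rho^\parallel$ from $\rho^s$ under the no-false-positive constraint, namely $1-1/n$. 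Arguing by contradiction, an alternative pair $\{U',\mathcal{S}'\}$ with $P_d' > 1-1/n$ would immediately yield a discriminator for $\rho^\parallel$ versus $\rho^s$ exceeding $1-1/n$, contradicting Stanisic--Turner. Our QFT + $(Q\neq 0)$ protocol, by the probability calculation of Eq.~\ref{eq: analytic_prob}, saturates the bound and is therefore optimal.

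The main obstacle I anticipate is cleanly justifying the reduction to the single worst-case $\rho^s$. Strictly speaking, Stanisic--Turner bounds $P_d$ only against the specific pair $(\rho^\parallel,\rho^s)$, whereas our protocol must satisfy linearity in $c_1$ uniformly across \emph{all} $\rho_k^\perp$ in the decomposition. The clean resolution is that $\rho^s$ itself appears among the $\rho_k^\perp$, so any uniform-$P_d$ protocol must in particular obey $P_d = P(s\in\mathcal{S}\mid\rho^s) \leq 1-1/n$; no separate treatment of the other partition states is required. A secondary subtlety is that one could relax the $P_i=0$ design choice and consider protocols with $P_i > 0$ (still affine in $c_1$ provided the uniformity across $k$ holds), but the underlying trace-distance interpretation of the Stanisic--Turner result extends to bound $|P_d-P_i|\leq 1-1/n$, so our protocol's optimality is unaffected.
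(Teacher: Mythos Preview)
Your proposal is correct and follows essentially the same route as the paper's proof: restrict to the worst-case binary mixture $\rho = c_1\rho^\parallel + (1-c_1)\rho^s$, invoke the Stanisic--Turner bound of $1-1/n$ on the success probability under the constraint $P_i=0$, and derive a contradiction. Your treatment of the subtleties---why the single instance $\rho^s$ suffices (it lies in the allowed family, so any uniform-$P_d$ protocol is already constrained on it) and the $P_i>0$ relaxation---is in fact more careful than the paper's terse argument, but the core strategy is identical.
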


\begin{proof}
   We proceed with a proof by contradiction.  
   
   Let $\{U', \mathcal{M}'\}$ denote a linear optical interferometer and a set of measurements that can estimate the coefficient $c_1$ for any $\rho$ with an optimal success probability greater than that achieved using the QFT, i.e. $P' > P_{QFT} = 1 - \frac{1}{n}$. In particular,  let $\rho$ be an $n$-photon quantum state constructed as a convex mixture of the fully indistinguishable state $\rho^\parallel$ and a singly distinguishable state $\rho^s$ as in Eq.~\ref{eq: construct}

    As shown in the work of Stanisic and Turner~\cite{stanisicDiscriminating2018}, any set of measurements $\mathcal{M}^{'}$ and any interferometer $U^{'}$ such that $\sum_{M \in \mathcal{M}^{'}}\mathsf{Tr}(MU^{'}(\rho^{\parallel}))=0$, must necessarily have $\sum_{M \in \mathcal{M}^{'}}\mathsf{Tr}(MU^{'}(\rho^{s})) \leq 1-\frac{1}{n}$. This implies that no pair $\{U^{'},\mathcal{M}^{'}\}$ can compute $c_1$ for the state of Eq.~\ref{eq: construct} with a success probability higher than $1-\frac{1}{n}$. However, by assumption we have demanded that $P' > 1-\frac{1}{n}$, which leads to a contradiction. Hence, the pair $\{U', \mathcal{M}'\}$ cannot exist.

\end{proof}

A few closing remarks. Firstly, note that for \emph{some} states $\rho$ one can hope to do better than the $1-\frac{1}{n}$ scaling. Indeed, replacing $\rho^s$ with $\rho^d$ ( a state of fully distinguishable photons) in Eq.~\ref{eq: construct}, one can compute $c_1$ with a scaling of $1-\frac{1}{n!}$~\cite{stanisicDiscriminating2018}. This is however not in contradiction with our optimality result, since in the worst-case, for any unknown input state $\rho$, one cannot do better than the $1-\frac{1}{n}$ scaling. Secondly, note that optimality in the scaling implies optimality in the sample complexity, and below we work this out explicitly.

\section{Protocol scaling}
\label{sec:numerics}

In this section we analyze the (classical and quantum) computational cost of retrieving GI within a fixed error $\epsilon$ with high confidence.
We will compare our protocol to the state-of-the-art \cite{pontQuantifying2022}, and their respective scaling with the number of photons $n$.

\subsection{$n$ prime and OBB sample scaling}\label{sec: prime-scaling}

According to results derived in prior sections, we know that for $n$ prime, GI can be expressed as $c_1 = \frac{(1-1/n)-P(Q\neq0)}{1-1/n}$ (Sec.\ref{sec: OBB}). The same holds for \emph{any n} when all $\rho^{\perp}_j$ are \emph{OBB states} (Sec.\ref{sec: partition_results}).
\\

However, we only have access to a limited number of samples to experimentally estimate $P(Q\neq0)$. Hence, let us denote $\tilde{P}(Q\neq0)$ as the statistical approximation of the output post-selection probability. We can use Hoeffding's inequality \cite{hoeffdingProbability1963} to see that we can estimate $P(Q\neq0)$ to within an additive error $\epsilon$ by performing $O(\frac{1}{\epsilon^2})$ runs.
\\
When we want to estimate GI with a finite number of samples, we have:

\begin{equation}
    \tilde{c_1} = 1-\frac{\tilde{P}(Q\neq0)}{1-1/n}
\end{equation}
Where $\tilde{c_1}$ is our statistical approximation of $c_1$. Now we introduce the error coming from $\tilde{P}(Q\neq0)$:
\begin{equation}
    \tilde{c_1} = 1 -\frac{P(Q\neq0)\pm\epsilon}{1-1/n} = c_1\pm \frac{\epsilon}{1-1/n} =c_1\pm \epsilon'
\end{equation}
Where in the second step we used the original expression for $c_1$ without any statistical error added. Additionally, as we said before, $\epsilon = O(\frac{1}{\sqrt{N_s}})$. Hence the total error of our GI computation scales as:

\begin{equation}
    \epsilon' = O\left(\frac{1}{(1-1/n)\sqrt{N_s}}\right)
\end{equation}
Hence, the sampling cost of estimating $c_1$ up to a fixed error scales with the number of photons as:

\begin{equation}
    N_s = O\left(\frac{1}{(1-1/n)^2}\right) = O\left(\frac{n^2}{(n-1)^2}\right) = O(1)
\end{equation}
\subsection{CI scaling $\forall$ n}
On the other hand, the post-selection probability for the CI scales as

\begin{equation}
    P = \frac{1}{2^{2n-1}}(1+(-1)^n\cdot c_1 \cdot \text{cos}\alpha)
\end{equation}

Using a similar analysis, we see that such success probability expression leads to the following sample complexity:

\begin{equation}
    N_s = O(4^{2n})
\end{equation}

However,combining all $2^n$ different output events with one photon per two modes (see ~\cite{pontQuantifying2022}), the resource scaling can be reduced to:

\begin{equation}
    N_s = O(4^{n})
\end{equation}

Note that the CI scaling \emph{increases exponentially} with the system size as $O(4^n)$, while the sample complexity given by our work is \emph{asymptotically constant} as $O(1)$, independent of $n$.

\subsection{$n$ non-prime sample scaling}\label{sec: non-prime-scale}

As shown in Sec.\ref{sec: non-prime_partition_results}, if $n$ is not a prime number and we are not in the OBB regime, we can always compute $c_1$ by solving the system of equations:
\begin{equation}
    \vec{P} = A \cdot \vec{c} \implies \vec{c} = A^{+} \cdot\vec{P} 
\end{equation}
between the Q-resolved probabilities $\vec{P}= (P(Q=0), P(Q=1), ...,P(Q=n-1))$, and the periodicity coefficients  $\vec{c}= (c_{1},...,c_{n})$. Furthermore in Eq.\ref{eq:pinv_decompose} we showed how $A^+$ can be advantageously decomposed as an invertible part $(DN)^{-1}$ and a pseudo inverse part $R^+$:
\begin{equation}
(A^+)_{ij}=(N^{-1} \cdot D^{-1} \cdot R^{+})_{ij} = \begin{cases}
     \frac{i}{\varphi(g)} \mu(\frac{g}{i}) &\text{if } gcd(i,g) = i\\
    0 &\text{otherwise}
    \end{cases}
\end{equation}

Where $g =\frac{n}{gcd(n,j)}$.From a practical point of view $A^+$ applies the same transformation to all values of $j$ that result in the same $g$, or equivalently it's acting on the vector of probabilities $$P(gcd(Q,n) = k) = \sum_{gcd(j,n)=k}P(Q=j)$$
We can now consider the experimental estimates of these Q marginals $\tilde{P}(gcd(Q,n) = k)$ used in the inversion as independent random variables with average $P(gcd(Q,n) = k)$ and a variance coming from a counting statistics over $N$ repetition of the experiment, which we bound as $\sigma^2 =\frac{1}{N} $.
We can propagate the uncertainty through the application of the reminder of $A^{+}$ applying it once for every $g$ value, obtaining the variance of the experimental estimates $\tilde{c}_i$ as:

\begin{equation}\label{eq:prop_var}
    Var[\tilde{c}_i] = \sigma^2 \cdot\sum_{j|n} (A^{+}_{ij})^2
\end{equation}

For a fixed row i, the non-0 elements of $A^+$ are all the j such that $g = \frac{n}{gcd(n,j)}$ respects $i|g |n$, or equivalently all the divisors of $\frac{n}{i}$ times $i$.
Now given the prime decompositions $n= \prod_k p_k^{l_k}$,$i= \prod_k p_k^{d_k}$, recalling that $\mu$ and $\varphi$ are associative under multiplication, we can decompose then the sum of equation \ref{eq:prop_var} separating the dependence on each prime divisor of $n$.
This manipulation results in:

\begin{equation}
\begin{split}
i^2 \sum_{j|\frac{n}{i}} \frac{\mu^2(j)}{\varphi^2(ij)} &=i^2 \prod_{k}\left( \sum_{m=0}^{l_k}\frac{\mu^2(p_k^{m})}{\varphi^2(p_k^{m+d_k})} \right) = i^2 \prod_k (\frac{1}{\varphi^2(p_k^{d_k})} + \frac{1}{\varphi^2(p_k^{d_k+1})})
\end{split}
\end{equation}

For $i=1$ (computing GI) the summation further simplifies to:
\begin{equation}
    \prod_k\frac{(p_k-1)^2 +1}{(p_k-1)^2} \leq \prod_k\frac{p_k^2}{(p_k-1)^2} = \frac{n^2}{\varphi^2(n)}
\end{equation}
which for $n$ prime evaluates to the inverse of the known optimal success probability factor $(\frac{n}{n-1})^2$. For an arbitrary $n$ the factor still retains a simple formula and can be bounded from above:

\begin{equation}
    \frac{Var[c_1]}{\sigma^2} \leq  \prod_k\frac{p_k^2}{(p_k-1)^2} \leq \prod_k(l_k+1)^2 = x(n)^2
\end{equation}

where x(n) is the number of distinct divisors of $n$. Up to a scaling factor $x(n)^2$ for the variance, $\tilde{c_1}$ behaves exactly as the experimental estimates $\tilde{P}(Q=j)$. We can thus apply Chebyshev's inequality \cite{detchebychefValeurs1867} and recover the number of samples $N_s$ and required to obtain an estimate of $c_1$ precise up to a fixed additive error $\epsilon$ with probability $\delta$:
\begin{equation}
\delta \leq \frac{\mathrm{Var}[c_1]}{\epsilon^2} \leq \frac{x^2(n)}{\epsilon^2 N_s}
\end{equation}
\begin{equation}
N_s \geq \frac{x^2(n)}{\epsilon^2 \delta} = o(\frac{n}{\epsilon^2})
\end{equation}

Additionally, we must account for the classical post-processing complexity that complements the quantum sampling in our protocol. This classical overhead arises from two tasks: first, enumerating all $x(n)$ divisors of $n$, which has a classical complexity of $O(\sqrt{n})$\cite{knuthArt1997}; and second, applying the pseudo-inverse $A^+$ to find $c_1$.

This last operations can actually take advantage of the sparsity of $A^+$.
Through decomposition \ref{eq:pinv_decompose} we can divide the operation in two steps: multiplication by $R^+$ and multiplication by $(D\cdot N)^{-1}$.
$R^+$ is of size $D \times n$ but it's sparse, it has exactly $n$ non-zero entries and can thus be efficiently applied with a linear number of operations.  $(D\cdot N)^{-1}$ isn't sparse but it has size $D \times D$ and thus can be applied in $O(x(n)^2) =o(n)$ operations. 

Overall, the worst scaling part of the post-processing is thus the application of $R^+$, resulting in a final $O(n)$ scaling.

\section{Experimental details}

As discussed in the main text, the experimental results were obtained using the 12-mode reconfigurable universal interferometer available on Quandela Cloud. The details on this machine and the specifications of the single-photon source are available in Ref.~[4] of the main text. At the time of the experiments, the average characterized value of nearest-neighbor pairwise two-photon Hong-Ou-Mandel visibility was $V_\text{avg}=89\pm3\%$. The experiments used photons 2, 3, 4, and 5 (blocking photons 1 and 6). The nearest neighbor pairwise values were $V_{23}=88.5\pm0.5\%$, $V_{34}=91.5\pm0.2\%$, and $V_{45}=86.5\pm0.5\%$. The source also had a multi-photon emission characterized by an integrated two-photon intensity correlation of $g^{(2)}=1.5\pm0.1\%$.

For the QFT protocol, we implemented pseudo-photon-number-resolved (PPNR) detection by spatially dividing the outputs of the QFT into distinct detectors. This spatial division was accomplished using additional QFT unitary transformations on the same reconfigurable chip. Each output of the interference QFT was sent into just one input of another QFT that equally distributes the photons at the output to allow threshold detectors to resolve up to $m$ photons for an $m$-mode QFT. Note that, quantum interference does not occur during this PPNR step since each QFT in this second layer takes just one non-vacuum input that may contain up to $m$ single photons. This means that every photon independently contributes to the detection patterns. However, the photons can still probabilistically bunch at the output of the device, which cannot be resolved by the threshold detectors. This leads to a reduced detection efficiency and a biased probability of not detecting multi-photon events. To compensate for this bias, we post-processed the PPNR distributions to artificially inflate the multi-photon probabilities by the proportion of bunched events that are discarded by the PPNR detection. Notably, this artificial correction is applied after computing the statistical uncertainty of the outcome probability so as not to bias the statistical precision.

The probability of resolving $n$ photons at the output of an $m$-mode PPNR detector is:
\begin{equation}
    p_\text{ppnr}=\binom{m}{n}\frac{n!}{m^n}.
\end{equation}
For the $3$-photon QFT experiment, this means that we compensate by multiplying the probability of $2$-photon pseudo-resolved bunching by $3/2$ and the probability of $3$-photon pseudo-resolved bunching by $9/2$. For the $4$-photon QFT experiment, we compensate by 4/3, 8/3, and 32/3 for 2, 3, and 4 photon outcomes, respectively.

In Figures~\ref{fig:qftcircuits} and \ref{fig:cicircuits} we give the exact Perceval \cite{heurtel2023perceval} circuits used to measure the 3- and 4-photon values of $c_1$. For the 4-photon QFT experiment, we were unable to resolve all outcomes in a single experimental run. Resolving up to 4 photons on each of the 4 output modes of the QFT would require 16 spatial modes (4 modes per PPNR detector). Instead, we performed 4 separate experiments back-to-back, where each experiment resolved up to 4 photons on just 2 of the 4 outputs, while leaving the remaining 2 modes directly monitored by threshold detectors. The distributions from these 4 experiments were then combined in post-processing to produce a full distribution that includes all 4-photon outcomes.

For the cyclic interferometer experiments, we followed the same protocol as in Ref.~\cite{pontQuantifying2022}. We sample the constructive and destructive interference outputs for 21 phases $\phi$ between 0 and $2\pi$, construct the interference fringe, and then fit a sinusoidal curve to extract the amplitude of the fringe. However, we did not correct for any multi-photon errors in any of our experiments.

\begin{figure}
    \centering
    \includegraphics[width=0.435\linewidth]{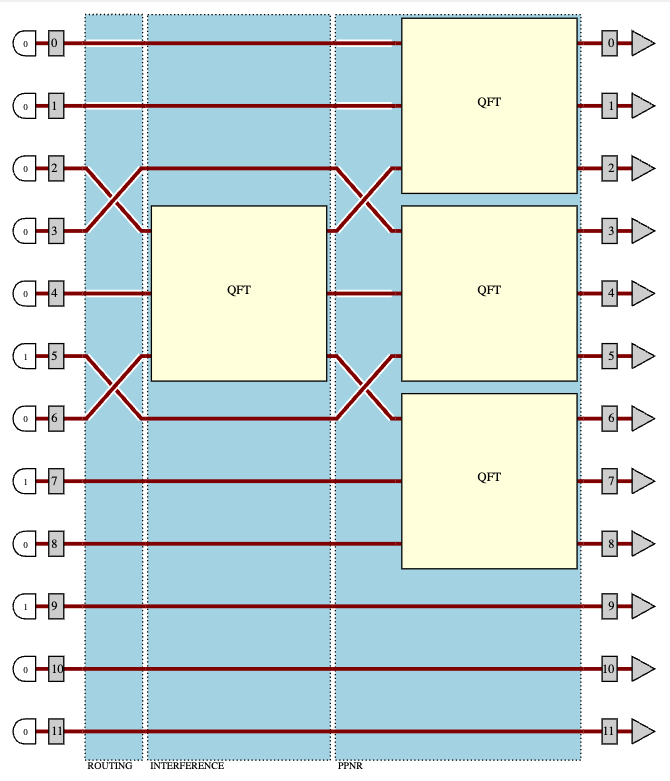}
    \includegraphics[width=0.555\linewidth]{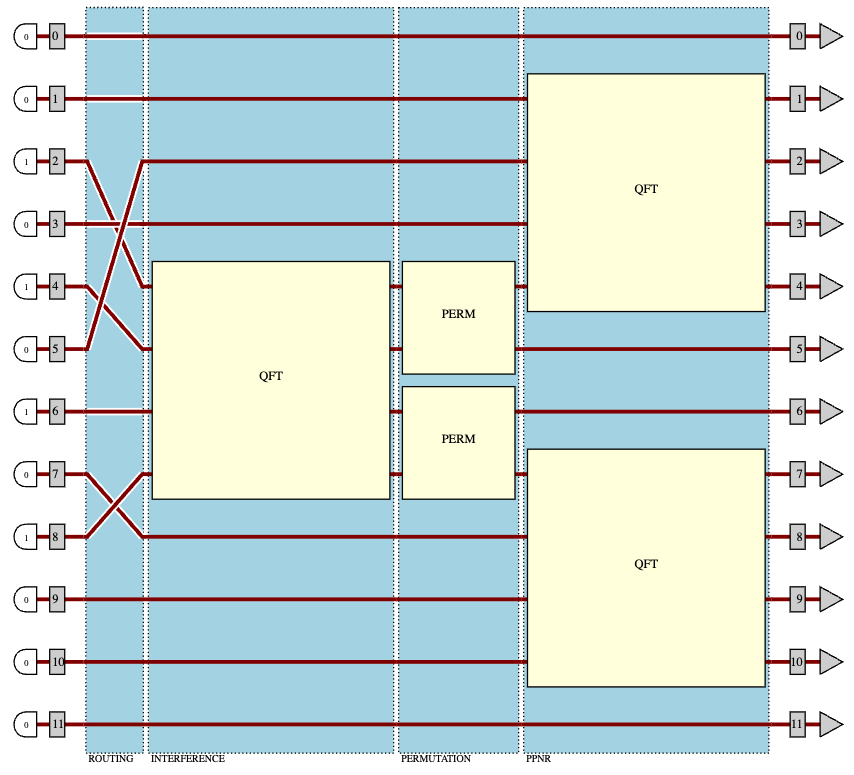}
    \caption{(left) The Perceval circuit to implement the three-photon quantum Fourier transform (QFT) measurement. Photons are input into modes 2, 4, and 6 while all other modes are vacuum. Routing is done to bring all three photons into the first QFT and then separate the outputs into three independent QFTs that implement the PPNR detection. (right) The Perceval circuit to implement the four-photon QFT measurement. The output modes of the interference layer are routed either to a four-photon PPNR QFT or sent directly to a threshold detector. The PERM operation is either an identity or a swap, giving rise to four circuits that are recombined to obtain the full 4-photon resolved distribution.}
    \label{fig:qftcircuits}
\end{figure}

\begin{figure}
    \centering
    \includegraphics[width=0.39\linewidth]{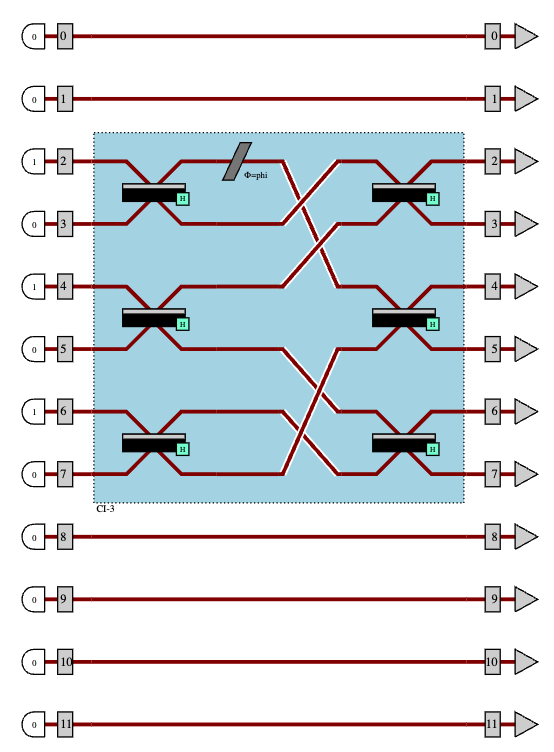}
    \includegraphics[width=0.47\linewidth]{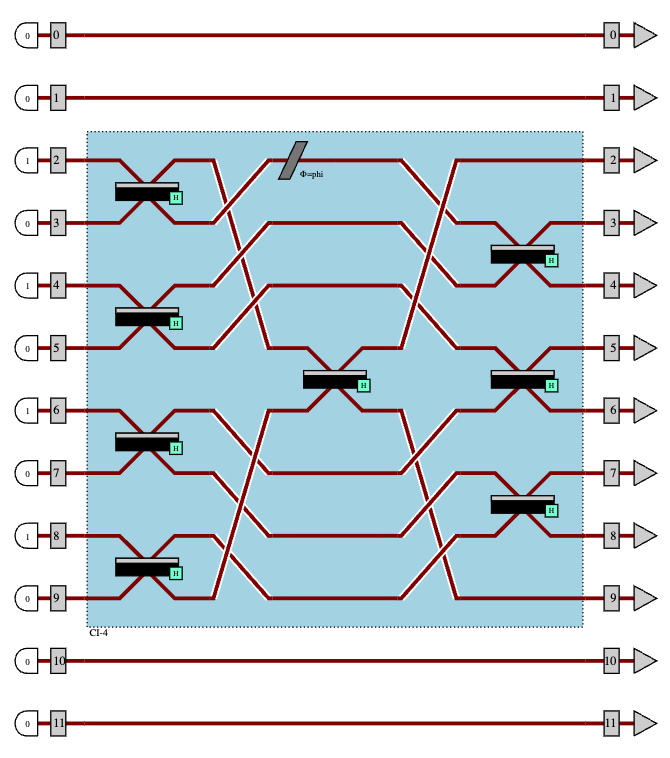}
    \caption{(left) The Perceval circuit to implement the three-photon cyclic interferometer (CI) measurement. Photons are input into modes 2, 4, and 6 while all other modes are vacuum. (right) The Perceval circuit to implement the four-photon CI measurement. The photons are input into modes 2, 4, 6, and 8 while all other modes are vacuum. In both cases, the phase $\phi$ is varied to obtain the interference fringe needed to extract $c_1$.}
    \label{fig:cicircuits}
\end{figure}
\section{Uniformity law for the periodic ZTL}

Our prior results all apply to a case where we have an $n$-photon input state, where $n=m$ with one photon per mode. This is because, for our protocol, we require that for the fully indistinguishable state, $P_i(Q\neq0) = 0$ and according to the \ZTL{} this is only true in this setting. 

We now present a general property of the \pZTL{} which we derived while working on our results and which we believe can have applications beyond our work. In particular, we prove the uniformity of the probability distribution for the non-suppressed events dictated by the \pZTL{} when we have $n\neq m$ photons. In order to prove such uniformity, we introduce the following theorem:

\begin{theorem}\label{thm: proof_of_uniformity}
    Given a $t$-periodic Fock state $\vec{r}_t$ (see Definition~\ref{defn: t-periodicFock}) with $n_t$ photons per period, in $m$ modes, subjected to $U = QFT_m$. For output states where $Q(\vec{s})  = 0 \mod \frac{m}{t}$, each $Q$ value has an equal probability of $1/t$, if $gcd(n_t, t) = 1$.
\end{theorem}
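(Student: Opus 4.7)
The plan is to exploit the output shift invariance of the QFT, established in Section~\ref{sec: shift_invariance}, combined with a group-theoretic identification of which $Q$-values can be reached from each other. The derivation leading to Eq.~\ref{eq: prob_shifted_final} only used that cyclic shifts preserve occupation-number multisets and that multiplying a permanent by $\omega^{kQ(\vec{r})}$ leaves its squared modulus unchanged; in particular neither coincidence inputs nor $n=m$ were required. Hence the identity $P_{\vec{r}_t \to \vec{s}^{(k)}} = P_{\vec{r}_t \to \vec{s}}$ extends verbatim to any $t$-periodic Fock input $\vec{r}_t$.

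Under the cyclic output shift $\vec{s}\mapsto \vec{s}^{(k)}$, the $Q$-value transforms as $Q(\vec{s}^{(k)})=Q(\vec{s})+kn\bmod m$, where $n=n_t\cdot (m/t)$ is the total photon number. Summing the shift-invariance identity over all outputs with prescribed $Q$-value then yields
\begin{equation}
P_{\vec{r}_t}(Q=y)\;=\;P_{\vec{r}_t}\bigl(Q=y+kn\bmod m\bigr)\qquad\forall\,k\in\mathbb{Z},
\end{equation}
so $P_{\vec{r}_t}(Q=\cdot)$ is constant along the orbits of the cyclic subgroup $H:=\langle n\rangle\leq \mathbb{Z}_m$.

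It remains to identify $H$ and couple shift invariance with the \pZTL{}. Using $\gcd\bigl(n_t(m/t),m\bigr)=(m/t)\gcd(n_t,t)$, we find $|H|=m/\gcd(n,m)=t/\gcd(n_t,t)$. Under the hypothesis $\gcd(n_t,t)=1$, this gives $H=\{0,m/t,2m/t,\dots,(t-1)m/t\}$, which is exactly the set of $t$ allowed $Q$-values permitted by the \pZTL{} (Eq.~\ref{eq: pZTL}). Since these are the only $Q$-values with non-zero probability, and $P_{\vec{r}_t}(Q=\cdot)$ is constant on this set and normalises to one, each of the $t$ allowed values occurs with probability $1/t$.

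The main subtlety I anticipate is justifying that the shift-invariance derivation of Section~\ref{sec: shift_invariance} really extends unchanged to arbitrary Fock inputs (not just coincidence ones), which should follow by observing that the factor $\prod_j s_j!$ is automatically invariant under cyclic relabelling of modes. A secondary point worth verifying is that the coprimality hypothesis $\gcd(n_t,t)=1$ is tight: if it fails, $H$ is a proper subgroup of the allowed $Q$-set, the shift-invariance action ceases to be transitive, and one should expect uniformity to genuinely break --- a small counterexample (e.g.\ $m=4$, $t=2$, $n_t=2$) would confirm that the hypothesis cannot be dropped.
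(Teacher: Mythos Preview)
Your proof is correct and follows essentially the same strategy as the paper's: both exploit the output-shift invariance of the QFT (Section~\ref{sec: shift_invariance}), track how $Q$ transforms under a cyclic output shift, and use $\gcd(n_t,t)=1$ to conclude that shifts act transitively on the $t$ allowed $Q$-values dictated by the \pZTL{}. The only difference is presentational: you identify the orbit as the cyclic subgroup $\langle n\rangle\leq\mathbb{Z}_m$ and compute its order via $\gcd(n,m)=(m/t)\gcd(n_t,t)$, whereas the paper reaches the same conclusion by verifying directly that $l\mapsto n_t\,l\bmod t$ is a bijection on $\{0,\dots,t-1\}$.
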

\begin{proof}
In order to prove this property, we use the shift-invariance of the $QFT$ for linear optics, as derived in Sec.~\ref{sec: shift_invariance}. 

We define an $m$-mode output vector as $\vec{s} = (s_1, ..., s_m)$ where $s_j \in (1, ..., n)$. Now, let $\vec{s} \ '$ be another output state whose elements are constructed by shifting the elements of $\vec{s}$ to the following mode in a periodic way (s.t. the elements in the mode $m$ get shifted to mode $1$) so that we have: $s_j' = s_{j+1 \ \text{mod} \ m}$. We re-write $Q(\vec{s})$ as: 

\begin{equation}
    Q(\vec{s}) = \sum\limits_{j=0}^{m-1} s_j \cdot j \ \ \text{mod} \ m \enspace , 
\end{equation}
where we have just re-written the sum of the elements of the mode-assignment vector as $\sum\limits_i \vec{d}_i(s) = \sum\limits_{j=0}^{m-1} s_j \cdot j$. We can now write $Q(\vec{s} \ ')$ in terms of $Q(\vec{s})$ as: 
\begin{equation}
    \begin{split}
        Q(\vec{s} \ ') &= \left(\sum_{j=0}^{m-1} s_j'\cdot j \right) \ \text{mod} \ m = \left(\sum_{j=0}^{m-1} s_{j+1}\cdot j \right)\ \text{mod} \ m= \left(\sum_{k=0}^{m-1} s_{k}\cdot (k-1)\right) \ \text{mod} \ m \\
        &= \left(\sum\limits_{k=0}^{m-1} s_k \cdot k \right) \ \text{mod} \ m \  - \left(\sum\limits_{k=0}^{m-1} s_k \right) \ \text{mod} \ m  = Q(\vec{s}) - n \ \text{mod} \ m \\
        &= Q(\vec{s}) - \left(\frac{m}{t}\cdot n_t\right) \ \text{mod} \ m \enspace ,
    \end{split}
\end{equation}
re-calling that $n_t$ is the number of photons per period $t$ and therefore the total number of photons is the number of repeated patterns $\frac{m}{t}$ multiplied by $n_t$. The subset of states generated by shifting the elements of vector $\vec{s}$ are related, such that each shift subtracts a multiple of $\frac{m}{t}\cdot n_t$ from $Q(\vec{s})$:
\begin{equation}\label{eq: Q_multiples}
    \begin{split}
        Q(\vec{s}\ ') &= Q(\vec{s}) - \left(\frac{m}{t}\cdot n_t\right) \ \text{mod} \ m \\
        Q(\vec{s}\ '') &= Q(\vec{s}) - \left(2\cdot\frac{m}{t}\cdot n_t\right) \ \text{mod} \ m \\
        Q(\vec{s}\ ^{m-1}) &= Q(\vec{s}) - \left((m-1)\cdot\frac{m}{t}\cdot n_t\right) \ \text{mod} \ m \\ \enspace .
    \end{split}
\end{equation}
    
Since we are working with quantities $\text{mod} \ m$, a direct symmetry is that $Q(\vec{s} \ ^{(i)}) = Q(\vec{s} \ ^{(i+t) \ \text{mod} \ m}), \ \forall \ n_t$.

What the above tells us is that if $Q(\vec{s})$ is a multiple of $\frac{m}{t}$, the shifted $Q(\vec{s}^{\ (i)})$ are also multiples of $\frac{m}{t}$ and therefore any non-suppressed outputs leading to $Q$ mean all the shifted outputs in that family are also non-suppressed. This applies to all outputs and their shifted families because these properties are independent of the specific output $\vec{s}$. Therefore, if the $Q$ values are uniformly distributed among a general shifted family of outputs, this means that they are uniformly distributed across \emph{all} shifted families

    Due to the shift-invariance of the $QFT$, applying the interferometer $U = QFT_m$ means that output states equivalent up to a global shift will share the same transition probability, $P_{\vec{r} \rightarrow \vec{s}} = P_{\vec{r} \rightarrow \vec{s}^{(i)}} \  \forall \ (i = 0, ..., m-1)$. Therefore, each possible output in the shifted families have the same probability which, if the $Q$ values are uniformly distributed, then  leads to an equal probability for each value $Q$ (since they appear an equal number of times).

    The task then becomes to show \emph{when} $Q$ is uniformly distributed in a general shifted family. Let us first re-write the last line of Eq.~\ref{eq: Q_multiples} as:
    \begin{equation}\label{eq: Q_shift}
        Q^{l} = k \cdot \frac{m}{t} - \left(l\cdot\frac{m}{t}\cdot n_t\right) \ \text{mod} \ m \enspace ,
    \end{equation}
    where we have replaced $Q(\vec{s})$ with a value that is not suppressed (by the periodic ZTL) $k\cdot \frac{m}{t}$ and the shifted $Q$ is labelled by an arbitrary shift $l$. It is important to note that any periodic state will necessarily have $t \leq m$ and $\frac{m}{t}$ an integer, and that symmetry dictates $Q(\vec{s} \ ^{(i)}) = Q(\vec{s} \ ^{(i+t) \ \text{mod} \ m}), \ \forall \ n_t$, i.e. that any output vectors with value $Q$, if shifted by period $t$, give an equal value of $Q$. This means that we only have to show that for shifts $l = (0, ..., t-1)$ each value of $Q$ appears once. If this is the case, then this pattern will be repeated up to $m-1$ shifts, leading to uniformly distributed $Q$ values. Since $k$ is fixed, we only need to focus on the second term of Eq.~\ref{eq: Q_shift} to show that $Q^l$ will reach every value of $Q$. In order to prove it, we start from the division identity with $x = \frac{m \cdot n_t\cdot l}{t}$:
    \begin{equation}
        \frac{x}{m} = \frac{x \bmod m}{m} + a \implies \frac{n_t l}{t} = \frac{(\frac{m \cdot n_t\cdot l}{t} \bmod m) \cdot t}{m \cdot t} + a = \frac{(\frac{m \cdot n_t\cdot l}{t} \bmod m) \cdot t/m}{t} + a
    \end{equation}
    Where we have multiplied the first term in the right-hand side by ($\frac{t}{t}$). Applying the division identity again with $x = n_t \cdot l$ and modulus $t$:
    \begin{equation}
        \frac{n_t \cdot l}{t} = \frac{(n_t \cdot l \bmod t)}{t} + a 
    \end{equation}
   Equating the RHS of the two previous equations, we finally obtain:
   \begin{equation}
       (\frac{m \cdot n_t\cdot l}{t} \bmod m) \cdot  = \frac{m}{t}\cdot (n_t \cdot l \bmod t)
   \end{equation}
   Now the task reduces to showing that $(n_t \cdot l \bmod t)$ generates all values modulo $t$ exactly once when varying $l$. We know that $l = \mathbb{Z}_t := {0, ..., t-1}$ the set of all values modulo $t$. We can define a function $f: \mathbb{Z}_t \mapsto \mathbb{Z}_t$ that acts on $l$ as $f(l) = l \cdot n_t \ \text{mod} \ t$. In order to show that this function maps distinct values of $\mathbb{Z}_t$ to other distinct values in the set, we have to prove that the function $f$ is bijective, such that $n_t \ \text{mod} \ t$ simply permutes all values in the set $\mathbb{Z}_t$. We do this by first showing that $f$ is injective, i.e. it is a one-to-one function. In order for $f$ to be injective, we must prove that $f(l_1) = f(l_2) \implies l_1 = l_2$. From modular arithmetic, we know that:
   \begin{equation}
       k\cdot a \equiv k\cdot b \text{ mod }p, \text{ }gcd(k,p)=1 \implies  a \equiv  b \text{ mod }p
   \end{equation}
    Which directly implies in our formalism $f(l_1) = f(l_2) \implies l_1 = l_2$, proving injectivity for $gcd(n_t, t) = 1$ (i.e. when $n_t$ and $t$ are co-prime). We have shown that $f(l)$ is injective, and since $l \in \mathbb{Z}_t $, i.e. we have a finite set mapping to a finite set, then we have bijectivity. Therefore $Q^l$ spans all values of $Q$ exactly once, when $n_t$ and $t$ are co-prime. If $n$ is prime then this is always true, since the period $t=n$ and $n_t < n$.
\end{proof}


\end{document}